\def\BState{\State\hskip-\ALG@thistlm}
\newcolumntype{P}[1]{>{\centering\arraybackslash}p{#1}}
\newcommand\numberthis{\addtocounter{equation}{1}\tag{\theequation}}
\newtheorem{prop}{\hspace{0.5cm}Proposition}
\newtheorem*{proof*}{\hspace{0.8cm}{\it{Proof}}}
\def\BState{\State\hskip-\ALG@thistlm}
\newcolumntype{P}[1]{>{\centering\arraybackslash}p{#1}}
\begin{document}
%
\title{ Device vs Edge Computing for Mobile Services: Delay-aware Decision Making to Minimize Power Consumption}
%
%
%
\author{Meysam~Masoudi, 
Cicek~Cavdar 
\\
School of Electrical Engineering and Computer Science, KTH Royal Institute of Technology,\\ Email: 
\{masoudi,cavdar\}@kth.se 
\thanks{ Part of this work has been published in IEEE WCNC2017 \cite{Maso1703:Green}. }
\thanks{This study is supported by EU Celtic Plus Project SooGREEN Service Oriented Optimization of GREEN mobile networks.}
}
\maketitle
\begin{abstract}
A promising technique to provide mobile applications with high computation resources is to offload the processing task to the cloud. Utilizing the abundant processing capabilities of the clouds, mobile edge computing enables mobile devices with limited batteries to run resource hungry applications  and to save power. However, it is not always true that edge computing consumes less power compared to device computing. It may take more power for the mobile device to transmit a file to the cloud than running the task itself. This paper investigates the power minimization problem for the mobile devices by data offloading in multi-cell multi-user OFDMA mobile edge computing networks. We consider the maximum acceptable delay as QoS metric to be satisfied in our network. We formulate the problem as a mixed integer nonlinear problem which is converted into a convex form using D.C. approximation. To solve the   \textcolor{black}{converted} optimization problem, we have proposed centralized and distributed algorithms for joint power allocation and channel assignment together with decision-making. 
Simulation results illustrate that by utilizing the proposed algorithms, considerable power savings can be achieved, e.g., about $ 60 \% $ for  large bit stream size compared to local computing  baseline.
\end{abstract}
\begin{IEEEkeywords}
Offloading, Resource Allocation, Mobile Cloud Computing, Mobile Edge Computing.
\end{IEEEkeywords}
%
\IEEEpeerreviewmaketitle
\section{Introduction}
%
%
%
%
\IEEEPARstart{S}{wift} growth in the development of \textcolor{black}{resource-hungry} mobile applications has motivated users to use smart phones as a platform for running the applications. However, mobile devices cannot always be considered as a platform for \textcolor{black}{resource-hungry} applications due to their limited power and processing capacity. Moreover, one of the key concerns of users is the battery lifetime of mobile devices \cite{kumar2010cloud} while running the applications, knowing the fact that increasing the clock frequency of a CPU increases the power consumption \cite{kwak2015dream}. Therefore, there is a tension between the \textcolor{black}{resource-hungry} applications and mobile devices \textcolor{black}{with limited battery and processing capacities}. To tackle the aforementioned problem, one solution is to bridge the gap between available and required resources by offloading the burden from mobile devices to the cloud \cite{kaewpuang2013framework}. 

\textcolor{black}{
	Mobile cloud computing (MCC) provides infrastructure, platform, and software as services to  mobile users \cite{xu2013survey}. Alongside  MCC,  mobile edge computing (MEC) has been proposed to bring the computation resources even closer to the mobile users. In MCC, the cloud is equipped with farms of computers, and it is considered as a fully centralized approach to provide such a service. On the other hand, MEC is supposed to be deployed in a distributed manner \cite{2017arXiv170205309M}.  Since these services are provided for the mobile users, the interaction between edge cloud and mobile users is inevitable. Consequently, once users decide to offload data to the cloud, it is necessary to  utilize the available resources efficiently. Otherwise, users cannot benefit from the potential advantages of offloading.}

It is true that edge computing can potentially save power for the mobile users \cite{kumar2010cloud}; however, this is not always true when the device consumes more power to transmit  data to the cloud than to process that data itself \cite{zhang2015offloading}. Because of the interference and radio channel conditions, the transmission of  data may consume more power for the mobile device. However it is not trivial to decide after making a simple comparison of two power figures for each device served by one base station since the decision may create interference and change the channel conditions for neighboring devices in the surrounding cells. There is  another important factor which has an impact on the decision: delay. A decision making procedure must consider the delay sensitivity of the applications to determine whether to choose local processing or offloading. Mobile devices consume more power as the delay requirement gets more stringent to process a certain task \cite{zhang2013energy}. \textcolor{black}{
 Example applications that can highly benefit from collaboration between mobile devices and cloud platform \cite{tran2019joint} includes online gaming \cite{basiri2018delay}, face recognition and detection \cite{soyata2012cloud}, healthcare \cite{hosseini2017deep}, tele-surgery \cite{nunna2015enabling}. Among these applications, some are more sensitive to delay than others, for instance, the interaction games, e.g., action and racing games, are more sensitive to the delay in comparison with puzzle and strategy games \cite{basiri2018delay}.} 
Delay requirements of different mobile broadband services can be seen in Table \ref{DelayReq}. 

In this paper, we investigate the power saving potential of data offloading in mobile devices under a multi-cell multi-user scenario and propose efficient algorithms to make decisions simultaneously for mobile devices to minimize the total power consumption by meeting the delay requirements from the services. Channel assignment and power allocation problems are considered jointly with the offloading-decision. 
\begin{table}[ht]
\centering
\bgroup
\caption{Acceptable delay for different services} \label{DelayReq}
\def\arraystretch{1.2}
\begin{tabular}{ m{3.5cm} l }
\hline
\textbf{Service Type} & \textbf{Acceptable Delay \cite{claypool2006latency,dusi2012closer,skorin2010analysis}}\\ \hline
\textbf{Online Games} & $ < 1000 $ $ ms $ \\ 
\hspace{10pt} Omnipresent & $ 1000 $ $ ms $ \\ 
\hspace{10pt} Third person avatar & $ 500 $ $ ms $\\ 
\hspace{10pt} First person avatar & $ 100 $ $ ms $\\ 
\textbf{Audio services} & $ < 450$ $ms $ \\ 
\hspace{10pt} Voice over IP & $ 200 $ $ ms $ \\ 
\textbf{Video Services} & $ < 150$ $ ms $ \\ 
\hspace{10pt} Video over IP & $ 70 $ $ ms $ \\
\textbf{Data} & $< 400$ $ ms $ \\ 
\textbf{Medical Data Transfer} & $ 100-400$ $ ms $ \\ 
\textbf{Tele-surgery} & $ 300$ $ ms $ \\ 
\textbf{Electrocardiogram } & $ \approx 1000$ $ ms $ \\
\textbf{Non real-time services} & Few seconds \\ 
\end{tabular}
\egroup
\end{table}
\subsection{Related Works }
There are a couple of issues to be addressed in the context of MCC and MEC, namely, the architecture, the power consumption of the network, and delay.

The surveys in \cite{2017arXiv170205309M} and \cite{DBLP:journals/corr/MaoYZHL17},  \textcolor{black}{studied the state of the art on the integration of}  MEC to the mobile networks, the computation offloading schemes, resource management problems, and their current challenges. The feasibility of mobile computation offloading is investigated in \cite{barbera2013offload} using experimental measurements. The authors in \cite{Tout2017}  proposed \textcolor{black}{an architecture for mobile computation offloading} where  device resources, e.g., energy and CPU usage \textcolor{black}{can be saved}. An  application offloading framework to the cloud is proposed and implemented in \cite{kosta2012thinkair} \textcolor{black}{where}  execution time and energy consumption of devices \textcolor{black}{ are reduced}. The authors in \cite{mahmoodi2015cloud} modeled the energy consumption of the mobile devices. For a single device, they formulated an energy minimization problem considering computation offloading to the cloud. The study in \cite{ellouze2015mobile} presents an offloading decision model to  extend the battery of single mobile device.  The authors in \cite{7425262}, modeled the offloading decision as a competitive game \textcolor{black}{considering multiple mobile devices where users try to minimize their energy consumption}. \textcolor{black}{These studies} did not consider the power allocation which has significant impact on the performance of their algorithms.  The authors in \cite{liu2014effective} solved the offloading decision problem \textcolor{black}{and reduced} the time complexity of the application offloading \textcolor{black}{aiming} to remove the processing burden from mobile devices; \textcolor{black}{However, they did not  consider the impact of} resource allocation.


%


Resource management schemes are  key techniques to  minimize the power consumption while guaranteeing  the quality of service (QoS) which is critical for the MEC networks \cite{piunti2015energy}. Accordingly in \cite{chen2016joint}, a heuristic based resource allocation approach is adopted to minimize the energy consumption of all users while making decision on offloading. In \cite{7307234}, a game theoretic approach is adopted to design an offloading mechanism for mobile devices. \textcolor{black}{Although}, a multi-user \textcolor{black}{scenario is } considered \textcolor{black}{, QoS and counter-impact between different users for service degradation is not taken into account.} In \cite{chen2015decentralized}, a decentralized offloading game is proposed to make decisions among mobile devices in a simple single channel scenario. The partitioning problem for a mobile data streaming application is defined  and solved by a genetic algorithm  in \cite{yang2013framework}. It is reported that computation partitioning between mobile and cloud  can enhance the application processing speed. In \cite{cheng2016computation}, the authors studied the problem of computation offloading in C-RAN based MCC network. In their study, the authors jointly optimized the beamforming design and power allocation with a decision making strategy to minimize the network energy consumption.

Along with power consumption, delay is a critial factor in MEC networks. The authors in \cite{DBLP:journals/corr/Mao0L17}   proposed an algorithm to optimize the power consumption and to minimize the delay. They considered a simple single-user MEC system. In this study, the interference analysis and its effect on the offloading decision is missing. The authors in \cite{7541539} modeled the problem of task offloading as a  Markov decision process and solved a delay minimization problem to find \textcolor{black}{an} optimal task scheduling policy. For the energy consumption and latency minimization problem, a partial computation offloading algorithm is proposed in \cite{wang2016mobile} to optimize the computational speed of mobile devices and their transmit power. In \cite{7850968}, to minimize the offloading energy consumption, the authors \textcolor{black}{studied} joint optimization of computing and radio resources considering the latency constraints in a MEC network. The authors in \cite{liu2016wireless} considered the problem of resource scheduling for multi-service multi-user MCC networks to minimize  the average queuing delay of the system. A dynamic task offloading algorithm using Lyapunov optimization is proposed in \cite{huang2012dynamic}, aiming at minimizing the energy consumption of one user with constraint on the maximum acceptable application delay. However, the interference and its impact on offloading decision is  \textcolor{black}{not taken into account}.  The authors in \cite{zhang2013energy} proposed a model for the mobile device energy consumption. They have derived an offloading policy considering both delay and energy consumption under a single stochastic wireless channel with only "good" or "bad" channel state. However, \textcolor{black}{this study} is limited with single-user single-channel \textcolor{black}{scenario, considering neither interference nor QoS.} The authors in \cite{8234573}  have focused on the \textcolor{black}{ trade-off between} latency and energy consumption in MEC network overlaid by small cells. They optimized the communication resources for computation offloading while considering delay sensitive tasks. In \cite{Maso1703:Green}, we proposed  \textcolor{black}{an algorithm for joint optimization of } power allocation, offloading decision making, and channel assignment (J-PAD)  to perform  resource allocation considering both interference and delay constraints in  multi-cell multi-user networks. \textcolor{black}{In preliminary version of our study, devices offload the processing task to the cloud aiming to minimize overall power consumption of devices}. \textcolor{black}{However, this approach was centralized and it is important to design decentralized algorithms for practical usage to enable mobile devices to make their decisions}.

\subsection{Contributions}
There are still plenty of challenges to be tackled in the multi-cell multi-user and multi-channel MEC networks. The \textcolor{black}{joint} problem of resource allocation and decision making for data offloading in \textcolor{black}{such networks considering both}  QoS and interference is still missing in the literature.
In this paper, we aim to minimize the power consumption of users while considering the users' QoS in terms of maximum tolerable delay. We formulate the resource allocation and offloading optimization problem.  To have a tractable solution, we convert the problem into a convex form and propose two algorithms to solve the \textcolor{black}{convex} problem in a polynomial time. 

The main contributions of this paper can be summarized as follows: 
\begin{itemize}
\item[$\bullet$ ] In the context of multi-cell multi-user OFDMA MEC networks, we formulate the resource allocation and offloading problem that is aware of network status and users' demand aiming to minimize the total power consumption of all users subject to constraints on QoS.
\item[$\bullet$ ] We formulate the problem as a mixed integer nonlinear optimization problem (\textbf{MINLP}). To solve the problem, we convert it to the convex form using variable changing, DC approximation, adding a penalty factor, and relaxing the binary constraints. Therefore, the \textcolor{black}{converted} problem can be solved in a polynomial time. \textcolor{black}{Proven that it converges in a polynomial time, the proposed efficient solutions in terms of complexity, compared to the lower bound have  30 \% percent worst-case optimality gap}. 
 \textcolor{black}{To evaluate the optimality, we compared the solution with the lower bound that is obtained by solving the problem for  interference-free scenario. }
\item[$\bullet$ ] We also propose two algorithms to solve the problem of joint resource allocation and decision-making. 
The first algorithm is a centralized scheme, designed to be performed at the base station while the second one is a distributed scheme, which requires a partial information exchange, suitable to be performed at the user terminal. The complexity of these algorithms are also investigated.
\item[$\bullet$ ] Through simulations, we show that there exists an offloading region for each user where offloading can help  to save more power. By differentiating between cell edge users and normal users in the network, we show that the optimal region depends not only on a delay threshold and bit stream size of users but also on the position and channel conditions of the users. 
\end{itemize}
The rest of the paper is organized as follows. In Section~\ref{SectionII}, system model is presented. The problem formulation and the solution methodology are discussed in Section~\ref{SectionIV}. We propose our algorithms and corresponding complexity analysis in Section~\ref{SectionV} followed by the simulation results presented in Section~\ref{SectionVI}. Finally, we bring the concluding remarks in Section~\ref{SectionVII}.

\section{System Model and Problem Formulation} \label{SectionII}
\subsection{System Description}

According to \figurename \ref{SystemModel1}, we consider a cellular network with $ N_{c} $ base stations where mobile users (MUs) are uniformly distributed within a cell range. Each base station is equipped with a server which is responsible for the offloaded users' data processing and we assume there is a centralized unit which exchanges the required information between base stations using backhaul.
\begin{figure}
\centering
\includegraphics[scale = 0.40]{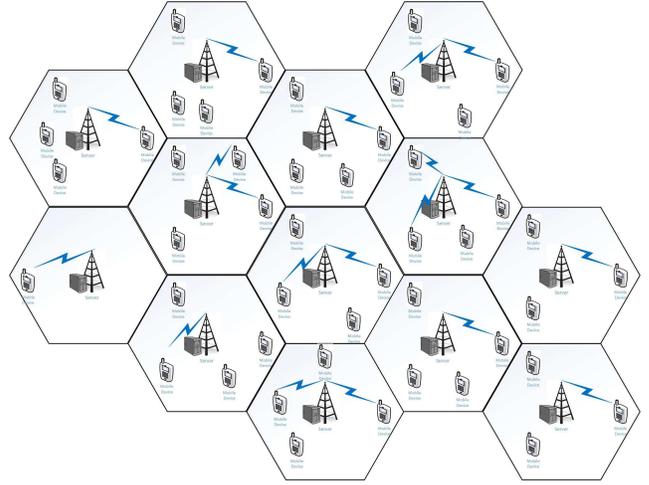}
\caption{System Model}
\label{SystemModel1}
\end{figure} 
Each cell can serve up to $ F_{i} $ active users. We assume that the available bandwidth $ B $ is divided to $ N $ sub-channels. The sub-channel model is adopted from \cite{access2010further} and 
is composed of large-scale fading, small-scale fading, and shadow fading. Also, we consider OFDMA as an access method; hence, users in the same cell cannot share the same sub-channel with each other. However, each user might experience an interference from neighboring cells. In this model, user $ j $ in cell $ i $ has a bit stream of size $ L_{i,j} $. 
Users can process  data on their own or send it to the cloud. Users cannot use both schemes, e.g., sending a portion of  data to the cloud and processing the remaining data locally. Data corresponding to the user $ j $ in cell $ i $ should be processed within a time frame called maximum acceptable delay (delay threshold), $ T_{i,j} $.
We assume that the processed data is short and the response time can be neglected \cite{zhang2013energy}. 

\textcolor{black}{Summary of notations can be found in Table \ref{tab:notations}
\begin{table}[ht]
	\centering
	\bgroup
	\caption{Summary of Frequency Used Notations }
	\label{tab:notations}
	\def\arraystretch{1.3}
\begin{tabular}{ l l} \hline
	Notation & Description \\\hline
	$ MU $& Mobile users\\
	$ N_c $&  Number of cells \\ 
	$ F $ & Number of MUs per cell\\
	$ N $ & Number of subcarriers \\ 
	$ p_{Local}^{i,j} $& Power consumption of CPU of user $ j $ in cell $ i $\\ 
	$ L_{i,j} $& Data length of user $ j $ in cell $ i $\\ 
	$ T_{i,j} $& Delay requirement of user $ j $ in cell $ i $\\ 
	$ C_{i,j} $&Required Number of CPU cycles\\
	$ p_{tx}^{i,j} $& Transmission power of user $ j $ in cell $ i $\\ 
	$ P^{Total} $& Total power consumption\\ 
	$ R_{min} $&  Minimum required rate\\ 
	$s_{i,j}$& Offloading decision of user $ j $ in cell $ i $\\
\end{tabular}	
\egroup
\end{table}
}

\subsection{Power Model}

\subsubsection{ Local Processing Power Model}
\textcolor{black}{
Different mobile devices have different processing resources and also applications require different amount of processing. Therefore, each mobile device has a processing task that can be defined in terms of	
\begin{itemize}
\item the  length of input  file $ L_{i,j} $ (bit), to run a certain processing task related with a certain application, including system settings, program
codes, and input parameters;
\item the delay threshold $ T_{i,j} $, including processing delay;
\item the required number of  CPU cycles  to accomplish the
processing, $ C_{i,j} $;
\item the output (e.g., the computation result).
\end{itemize}
The values and statistics on $ L_{i,j} $ and $ C_{i,j} $ can be acquired by applying program profilers \cite{lyu2017multiuser} \footnote{The program profiler monitors the program parameters, such as, execution time, required memory,  thread CPU time, etc \cite{lyu2017multiuser}.}}. 

When users are supposed to process their data locally, the CPU power consumption is dominant. It is composed of dynamic power, circuit power, and leakage power \cite{wang2016mobile}. The dynamic power as a dominating power in the CPU is a \textcolor{black}{superlinear function of execution frequency and required CPU cycles. The energy consumption of task completion at mobile device,  when operating at low voltage, is calculated as \cite{lyu2017multiuser},  
\begin{equation}
\label{eq:intitaloffpower}
E^{Local}_{i,j} = \zeta C_{i,j} {(f^{cpu}_{i,j})}^{m},
\end{equation}
where $ \zeta $ is effective switching capacitance depending on the chip architecture and  is in the order of  $ 10^{-11} $. $ C_{i,j} $ is the required CPU cycles for a given file size and is determined by the application \cite{zhang2013energy}.
For a given input size, $ C_{i,j} $ can be derived as, 
\begin{eqnarray}
	C_{i,j} = L_{i,j} x
\end{eqnarray}
where $ x $ is a random variable with an
empirical distribution \cite{lorch2001improving}. The estimation of this distribution depends on the complexity of the application and since this estimation is beyond the scope of our study, the interested readers can study \cite{yuan2003energy,yang2012energy,yuan2006energy}.
Let the  CDF of $ x $ be $ F_x $ and the probability that CPU  completes the task within the delay threshold be $ \rho $. Then, given $ L_{i,j} $, we have 
\begin{eqnarray}
&F_{X} (C_{i,j}/L_{i,j}) &= \rho \\
&C_{i,j} &= L_{i,j} (F_{X}^{-1}(\rho)) \label{eq:freqcycle}
\end{eqnarray}
Note that the complexity of a task is reflected in the distribution function of  $ F_X $. Recalling from \eqref{eq:intitaloffpower}, the energy consumption of completing task becomes: 
\begin{eqnarray}
E^{Local}_{i,j} = \zeta \sum_{k =1}^{C_{i,j}}(1-F_{X}(k)) (f_{i,j}(k))^m \label{eq:energy_raw}
\end{eqnarray} 
where $ C_{i,j} $ is defined in \eqref{eq:freqcycle}. It is worth mentioning that the term within the summation in \eqref{eq:energy_raw} is the energy consumption of CPU cycle $ k $ with clock frequency $ f_{i,j} $, provided that the task has not been completed yet.
 The authors in \cite{zhang2013energy} derived the optimal value for CPU frequency by minimizing the energy consumption of device such that the  task is completed within the delay threshold $ T_{i,j} $.  Then, the minimum energy consumption of device with optimal CPU frequency is given by \cite{zhang2013energy},
 \begin{eqnarray}
 E^{*}_{i,j} = \frac{\zeta }{T^{m-1}_{i,j}}  \sum_{k =1}^{C_{i,j}}  \theta(k,\rho)^{m} \label{eq:opt_en}
 \end{eqnarray} 
 where $ \theta(k,\rho) $ is a function $ \rho $ and application complexity.  In \cite{zhang2013energy} (Appendix C), it is proven that the summation in \eqref{eq:opt_en} is proportional to the $ L^{m}_{i,j}  $. 
} Hence, the minimum power consumption of CPU is proportional to the \textcolor{black}{$ (L/T)^m $}, where $ T $ is the maximum acceptable delay and $ L $ is the users' bit stream size and $ m $ is the power of the scaling factor. Consequently, we use the following model for local processing power consumption: 

\begin{equation}
\label{Local_Power_Consumption_Model}
p^{Local}_{i,j} = M^{A}_{i,j} \frac{L^{m}_{i,j}}{T^{m}_{i,j}},
\end{equation}
where $ p^{Local}_{i,j} $ is the local processing power consumption of user $ j $ in cell $ i $ and $ M^{A}_{i,j} $ is a constant value for user $ j $, in cell $ i $, and application $ A $ that depends on the users' CPU and application parameters \textcolor{black}{, e.g., $ \rho $ and $ k  $, and is given in  \cite{zhang2013energy}. For the sake of simplicity we remove the superscript $ A $ in the rest of paper}. In our assumption, the users utilize the entire available time for the task completion.

\subsubsection{Offloading Power Model}
The transmission power for sending data to the cloud is: 
\begin{equation}
\label{User_Power_Consumption}
p^{tx}_{i,j} = \displaystyle \sum_{n=1}^{N} a_{i,j,n} p_{i,j,n},
\end{equation}
where $ p^{tx}_{i,j} $ denotes the transmission power consumption of user $ j $ in cell $ i $, \textcolor{black}{ $ p_{i,j,n} $ is transmission power of the same users on subchannel $ n $,} and $ a_{i,j,n} $ is a binary variable representing whether the corresponding sub-channel is assigned to the user or not. Therefore, the user's total transmission power is
\begin{equation}
\label{Offloading_Power_Model}
P^{Tx}_{i,j} = \frac{1}{\eta} p^{tx}_{i,j} + p_c,
\end{equation}
where $ \eta $ is power amplifier coefficient and $ p_c $ is a constant circuit power.

\subsubsection{ Aggregated Power Model}
Total power consumption of the active users in the network can be written as:
\begin{equation}
\label{total_power_consumption_pij}
{\bf{P}}^{Total}= \displaystyle \sum_{i=1}^{N_c} \displaystyle \sum_{j=1}^{F_i} p_{i,j}
\end{equation}
where
\begin{eqnarray}
\label{p_ij} \nonumber
p_{i,j} & = & s_{i,j} p^{Tx}_{i,j} + (1-s_{i,j}) p^{Local}_{i,j} \\ 
\end{eqnarray}
The integer variable $s_{i,j}$ takes the value of $0$ if user $j$ in cell $i$ uses its own processor and takes the value of $1$ if the user sends  data to the cloud. Therefore, the total power consumption can be written as: 
\begin{eqnarray}
\label{total_power_consumption} \nonumber
{\bf{P}}^{Total} & = & \displaystyle \sum_{i=1}^{N_c} \displaystyle \sum_{j=1}^{F_i} \displaystyle \sum_{n=1}^{N} s_{i,j} \frac{1}{\eta}(a_{i,j,n} p_{i,j,n} + p_c)\\ & + & \displaystyle \sum_{i=1}^{N_c} \displaystyle \sum_{j=1}^{F_i} (1-s_{i,j}) \frac{M_{i,j} L_{i,j}^m}{T^m_{i,j}}.
\end{eqnarray}
Moreover, the signal to noise plus interference ratio at the base station in cell $ i $ is given by:
\begin{equation}
\label{SINR}
\gamma_{i,j,n}=\frac{p^{tx}_{i,j,n} h_{i,j,n}}{\sigma^2 + I_{i,n}},
\end{equation}
where the channel gain from $ j $th MU of $ i $th cell is denoted by $ h_{i,j,n} $. The channel gain from user $ m $, in cell $ k $ to the cell $ i $ is denoted by $ h^{i}_{k,m,n} $. The first term in the denominator of (\ref{SINR}) is the noise power in bandwidth $ B $ and the second term is the interference from other cells on channel $ n $ in cell $ i $ which can be calculated as:
\begin{equation} \label{Interferenec}
I_{i,n}=\displaystyle \sum_{\substack{ k=1 \\ k \neq i}}^{N_c} \displaystyle \sum_{m=1}^{F_i} a_{k,m,n} s_{k,m} p^{tx}_{k,m,n} h_{k,m,n}^{i}.
\end{equation}
\textcolor{black}{
\subsection{Delay Model}
\subsubsection{Local Processing Delay Model}
When users are processing locally we assume that the processing task can be completed within the  delay threshold and hence $ T^{Local}\le T_{i,j} $. As the more stringent acceptable delay becomes, the more power is consumed at mobile device. We assume that the processing time at mobile device is equal to the users'  delay threshold.
\subsubsection{Offloading  Delay Model}
In case of offloading, the experienced delay of user $ j $ in cell $ i $ is composed of transmission delay in uplink, processing delay at the edge cloud, and  transmission delay in downlink. Therefore, total transmission delay is expressed as: 
\begin{eqnarray} \label{eq:Delay}
T^{Off}_{i,j} = T^{tx, up}_{i,j} + T^{Edge}_{i,j} +  T^{tx, dl}_{i,j}
\end{eqnarray} 
The first term on the right-hand side of (\ref{eq:Delay}) is the uplink transmission delay and can be calculated based on the uplink transmission rate. The second term is due to the processing delay at the edge cloud and can be calculated as \cite{lyu2017multiuser},
\begin{eqnarray}
T^{Edge}_{i,j} = \frac{L_{i,j}}{F^{Edge}},
\end{eqnarray}
where $ F^{Edge} $ is the computation processing allocated to the user.  The processing delay is contingent on the processing resources and the work load at the edge cloud. Reducing the processing delay is an active research area \cite{liu2018edge} however it is out of scope of this paper. If the edge cloud is equipped with powerful processing resources, $ F^{Edge} $ becomes very large and hence for small $ L_{i,j} $, $ T^{Edge}_{i,j} $ becomes negligible compared to the transmission time. For instance, assume that the required CPU cycles be in the range of $ [1-1000] M cycles $, and the CPU frequency  is $ 10 GHz $ \cite{lyu2017multiuser}, then the processing time at the edge cloud will be in $ [0.05-5] msec $. 
The last term in (\ref{eq:Delay}), is the return transmission time and is determined by the length of response message as well as the downlink transmission rate. As per \cite{lyu2017multiuser,you2017energy,li2019energy} this downloading time is negligible due to the much smaller size of the resulting message assuming some health care  and face detection applications.
}

\subsection{Problem Formulation} \label{SectionIII}
In this section, we develop the mathematical formulation for decision making and resource allocation problem. The base station determines the offloading users and allocates sub-channels to its users and specifies the suitable power level on each sub-channel. The objective of the resource allocation is to minimize the aggregated power consumption of all users by allocating resources to the offloading users in a way that their delay requirement is satisfied. In our model we assume that users utilize the maximum available time. The optimization problem can be formulated as follows:

\begin{align*}
\label{Original_problem}
\min_{\{{\bf{a}},{\bf{p}},{\bf{s}}\}} & \hspace{1.5cm} {\bf{P}}^{Total}\numberthis \\
& \hspace{-0.5cm}\text{subject to }\\
\text{C1: } &\begin{matrix}
0 \leq p_{i,j}^{\text{Tx}} \leq p_{max}, & \forall i,j, &
\end{matrix} \\
\text{C2: } &\begin{matrix}
T^{tx}_{i,j} \leq T_{i,j}, & \forall i,j, &
\end{matrix} \\
\text{C3: } &\begin{matrix}
\displaystyle \sum_{j=1}^{F_i} \displaystyle \sum_{n=1}^{N} s_{i,j} a_{i,j,n} \log_{2} (1+\gamma_{i,j,n}) \leq R_{\textcolor{black}{i,} max}^{Proc}, & \forall i, &
\end{matrix} \\
\text{C4: } &\begin{matrix}
\displaystyle \sum_{j=1}^{F_i} a_{i,j,n} \leq 1, & \forall i,n, &
\end{matrix} \\
\text{C5: } &\begin{matrix}
a_{i,j,n} \in \{0,1\}, & \forall i,j,n, &
\end{matrix} \\
\text{C6: } &\begin{matrix}
s_{i,j} \in \{0,1\}, & \forall i,j. &
\end{matrix}
\end{align*}

In \eqref{Original_problem}, the objective is to minimize the total power consumption of all active MUs in the network. The constraint C1 indicates that the transmit power of each user is limited to $p_{max}$. 
In constraint C2, $ T^{tx}_{i,j} $ is the transmission time. This constraint guarantees \textcolor{black}{ that user $j$ in the  cell $i$ completes} its transmission before \textcolor{black}{ the delay threshold}, e.g., $ T_{i,j} $. If a user decides to process  data locally, then the CPU will be responsible for satisfying this constraint.  In our analysis we assume that CPU uses the entire available time. It is worth mentioning  that  $ T_{i,j} $ is given by the application and is an input to our problem. Later we will discuss how to relate this term to the rate.  The delays coming from processing \textcolor{black}{ at the edge} and \textcolor{black}{ downlink} transmission are not considered. The former is because of powerful processors and the latter is due to short response message size of the processed data \cite{zhang2013energy}. In our model, the processing capacity of the cloud server is limited with the total arrival rate of bit streams coming from different users at the same time. C3 captures this processing capacity. C3 guarantees that if a user decides to transmit, there will be enough processing capacity\textcolor{black}{, e.g., $R_{i, max}^{Proc}$} , for serving it. It does not have any effect on the transmission capacity but on users' decision. In other words, users offload their processing task only if there is enough capacity for the processing.  The constraint C4 guarantees the OFDMA assumption in each cell where each sub-channel is assigned to at most one user. The constraints C5 and C6 indicate that the sub-channel and data offloading indices are binary variables. It is worth mentioning that the constraint C2 can be written in an equivalent form. Using C2 we will have

\begin{equation}
\label{Delay_equivalent}
\frac{L_{i,j}}{T^{tx}_{i,j}} \geq \frac{L_{i,j}}{T_{i,j}}.
\end{equation} 

Defining $R_{min} \triangleq \frac{L_{i,j}}{T_{i,j} B}$ and noting that the left side of \eqref{Delay_equivalent} is the total normalized data rate of the $j$-th user in the $i$-th cell, we obtain the following equivalent constraint for C2:
\begin{equation}
\label{constraint_C3}
s_{i,j} \displaystyle \sum_{n=1}^{N} a_{i,j,n} \log_{2} (1+\gamma_{i,j,n}) \geq s_{i,j} R_{min}, \forall i,j.
\end{equation} 
In the rest of this paper, we consider the constraint C2 in the form presented in \eqref{constraint_C3}.

The optimization problem defined in \eqref{Original_problem} is a mixed integer nonlinear problem (MINLP)  and cannot be solved optimally in a polynomial time. The non-convexity is due to different reasons in the problem, \textcolor{black}{ for instance, the binary inherent of decision making variable (constraint C6) and the combinatorial nature of sub-channel allocation (constraint C5). Another reason of non-convexity is} due to  constraints C2 and C3 and existence of the power allocation variable in the denominator of SINR formula defined in \eqref{SINR}. 
\textcolor{black}{The offloading decisions are binary variables and they not only depend on the delay constraints, but also on the others’ decisions, channel quality, and limited radio resources. Therefore, the problem \eqref{Original_problem}  can be mapped to a special maximum cardinality bin packing problem which is NP-hard \cite{chen2015efficient}. Note that the optimal solution can be obtained by enumerating and comparing all possible computation offloading decisions (exhaustive search method) or  available tools that can solve MINLP problem \cite{belotti2013mixed}. However, due to the complexity of the problem, it is not possible to solve the problem in polynomial time.  Therefore, we focus on solving this complex problem in polynomial time.  } In the following section, we address how to deal with the aforementioned challenges and solve the problem by converting it into a convex form. 
\section{Solution Methodology}\label{SectionIV}
In this section, we aim to transform the primary problem defined in Section \ref{SectionIII} into a canonical convex form. In this regard, we classify the challenges into two categories, binary variables and non-convex functions. 

To resolve the challenges caused by the binary variables, one approach is to relax the troublesome constraints, sub-channel allocation for instance, to shape the problem into a convex form and then making a hard decision in the end as we did in \cite{masoudi2017energy}. An alternative approach is to add auxiliary constraints to enforce the solution to be in our desired form as we will describe later. Another approach is to break the problem into sub-problems so that one could successively first solve the problem for the  binary variable and consequently, given this variable, solve the rest of the problem.

To deal with the non-convex functions, we utilize a theory of optimization for a superclass of convex functions, called Difference of Convex (D.C.) functions \cite{kha2012fast}. Later we demonstrate that our problem can be written in form of D.C. functions. In the end, applying Taylor approximation enables us to solve the last stage of converting the primary problem defined in \eqref{Original_problem}, into a convex form. Having all these powerful approaches available, we tackle the problem, as follows.

In the first step, we break down the problem into two sub-problems and then solve them successively. The first sub-problem is to determine the channel assignment for each user in each cell. The second sub-problem is to find out the decision variable and power allocation. We use the solution of the first sub-problem as an input to the second sub-problem. Also, the results of \textcolor{black}{the} second sub-problem \textcolor{black}{are} used to update the solution for the first sub-problem and this process continues until the convergence. Furthermore, we apply two approaches to solve the second sub-problem. The overview of two utilized approaches to solve the problem can be seen in equations \eqref{method_1} and \eqref{method_2}.

In the first approach, after separating the sub-channel assignment, the problem can be solved jointly for other variables, e.g., power allocation and decision variable as follows: 
\begin{align*}
\label{method_1}
\overbrace{{\bf{a}}[0] \rightarrow ({\bf{p}}[0],{\bf{s}}[0])}^{\text{Initialization}} \rightarrow \ldots \rightarrow \overbrace{{\bf{a}}[t-1] \rightarrow ({\bf{p}}[t-1],{\bf{s}}[t-1])}^{\text{Iteration} \hspace{0.2cm} t-1} \\
\hspace{-1cm} \rightarrow \overbrace{{\bf{a}}[t] \rightarrow ({\bf{p}}[t],{\bf{s}}[t])}^{\text{Iteration} \hspace{0.2cm} t} \rightarrow \overbrace{ {\bf{a}}^{\star} \rightarrow ({\bf{p}}^{\star},{\bf{s}}^{\star})}^{\text{\textcolor{black}{Final} Solution}}. \numberthis
\end{align*} 

In the second approach, we separate sub-channel assignment, power allocation, and decision variable from each other as follows:
\begin{align*}
\label{method_2}
\overbrace{{\bf{a}}[0] \rightarrow \tilde{{\bf{p}}}[0]\rightarrow{\bf{s}}[0]}^{\text{Initialization}} \rightarrow \ldots \rightarrow \overbrace{{\bf{a}}[t-1] \rightarrow \tilde{{\bf{p}}}[t-1]\rightarrow{\bf{s}}[t-1]}^{\text{Iteration} \hspace{0.2cm} t-1} \\
\hspace{-1cm} \rightarrow \overbrace{{\bf{a}}[t] \rightarrow \tilde{{\bf{p}}}[t]\rightarrow{\bf{s}}[t]}^{\text{Iteration} \hspace{0.2cm} t} \rightarrow \overbrace{ {\bf{a}}^{\star} \rightarrow \tilde{{\bf{p}}}^{\star}\rightarrow{\bf{s}}^{\star}}^{\text{\textcolor{black}{Final} Solution}}. \numberthis
\end{align*}

The main difference between these two approaches is that in the former, we jointly solve the problem of power allocation and decision making; However, in the latter, we divide the second sub-problem into two steps and solve each sub-problem individually. \textcolor{black}{In both algorithms, we  iterate until the convergence criterion is met, e.g., the difference value of the optimization parameter be less than $ \epsilon $, or reach to the maximum number of iterations.} In the following subsections,  we first deal with solving the first sub-problem followed by solving the second sub-problem by converting it into a convex from.

\subsection{Sub-Problem One: Optimal Sub-channel Assignment}
Given the power allocation vector $ \tilde{{\bf{p}}}[t \mbox{--} 1] $, the optimal sub-channel assignment $ \bf{a} [t] $ for further power allocation and offloading in the next iteration $ t $ is as follows:
\begin{prop}\label{SubChannel}
Given the power vector, minimum power consumption is attained when each sub-channel in each cell is assigned to the MU with the highest effective interference on that sub-channel.
\end{prop}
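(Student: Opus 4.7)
The plan is to fix the power vector $\tilde{\mathbf{p}}[t-1]$ and the offloading decisions $\mathbf{s}[t-1]$ from the previous iteration so that problem \eqref{Original_problem} collapses into a pure combinatorial program in the assignment variables $a_{i,j,n}$ subject only to C4 and C5. Because the powers transmitted by every other cell on channel $n$ are frozen, the cross-cell interference $I_{i,n}$ in \eqref{Interferenec} is treated as a constant from the viewpoint of cell $i$'s assignment; combined with the OFDMA exclusivity in C4, this decouples the program into $N_c \cdot N$ independent per-cell per-sub-channel choices: pick at most one user $j \in \{1,\dots,F_i\}$ to allocate $n$ to.

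For each such $(i,n)$ I would run an exchange argument. Suppose the assignment sets $a_{i,j,n}=1$ but some other user $j^\star$ in the same cell has a strictly larger effective gain-to-noise-plus-interference ratio $h_{i,j^\star,n}/(\sigma^2+I_{i,n})$ than $h_{i,j,n}/(\sigma^2+I_{i,n})$. Swapping the assignment to $j^\star$ keeps the per-sub-channel power entering \eqref{total_power_consumption} unchanged, but strictly enlarges $\log_2(1+\gamma_{i,\cdot,n})$ appearing in the delay-equivalent rate constraint \eqref{constraint_C3}. Hence either user $j^\star$ can lower their transmit power in the subsequent power update while still meeting C3, or the extra rate slack permits additional users to migrate away from costly local processing (with the $M_{i,j}L_{i,j}^m/T_{i,j}^m$ penalty) in the next decision update. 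Iterating until no improving swap remains recovers exactly the greedy rule claimed in the statement, and C4 is preserved throughout because the operation is pairwise within one cell.

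The main obstacle I anticipate is the feedback loop through $I_{k,n}$ for neighbouring cells $k \neq i$: a swap inside cell $i$ changes $h_{i,\cdot,n}^{k}$ and hence the interference seen elsewhere, so the assumption that the interference environment is frozen is not literally true across cells. I would circumvent this by embedding the proposition inside a single iteration of the block-coordinate scheme \eqref{method_2}, where by construction the entire power vector (and therefore what each cell emits on $n$) is held fixed while $\mathbf{a}[t]$ is computed; the residual coupling then only enters through the following iteration's power update, preserving the per-step monotone descent the exchange argument requires. A secondary point is confirming that when the effective-interference ordering ties, or when no user in $(i,n)$ can satisfy the rate constraint, leaving the sub-channel idle ($\sum_j a_{i,j,n}=0$, allowed by C4) is consistent with the greedy rule; I would handle this as a degenerate boundary case.
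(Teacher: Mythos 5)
Your overall skeleton---freeze the powers, decouple per cell and per sub-channel, and run an exchange argument---is the same family of argument the paper uses, but you are missing the one step that makes the paper's proof close: inverting the binding rate constraint. The paper sets each assigned channel's rate equal to the requirement, $\log_2(1+\gamma_{i,j,n})=r_{min}$, and solves for the transmit power, obtaining $p^{tx}_{i,j,n}=\bigl(2^{r_{min}}-1\bigr)\big/\bigl(h_{i,j,n}/(\sigma^2+I_{i,n})\bigr)$. This makes the power on a channel explicitly and monotonically decreasing in the effective interference, so reassigning to the user (or channel) with the largest $EI$ lowers the objective \emph{within the same step}, and the contradiction with assumed optimality is immediate. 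You instead hold the power fixed and let the rate float, then argue the extra rate slack will be converted into power savings ``in the subsequent power update'' or by letting more users offload. That does not prove the proposition as stated (minimum power \emph{for the given power-vector/assignment step}); it proves, at best, a statement about the eventual behaviour of the block-coordinate iteration, and the claim that rate slack for $j^\star$ or additional offloading migrations necessarily reduces total power is asserted rather than shown.

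There is also a concrete feasibility gap in your swap. Moving channel $n$ from user $j$ to user $j^\star$ ``keeps the power unchanged and enlarges the rate'' only from the gaining user's perspective: user $j$ loses that channel's contribution to its own constraint \eqref{constraint_C3} and may become infeasible, so the exchange is not a valid improving move without further argument. The paper avoids this by comparing, for a \emph{fixed delivered rate} $r_{min}$, the powers required under the two assignments, so feasibility is preserved by construction. On the positive side, your remarks about the cross-cell coupling through $I_{k,n}$ (handled by freezing the whole power vector inside one iteration of \eqref{method_2}) and about ties and idle sub-channels are legitimate refinements that the paper's own proof silently glosses over.
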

\begin{proof}
Because the problem is power minimization and also minimum data rate requirement of users should be satisfied, the minimum power is consumed when the inequality of minimum required rate becomes the equality. Now let us assume that all users are given the best possible channel to reach their data rate with minimum power consumption. Also, let a user have a channel with effective interference value lower than a highest value and the user has data rate $ r_{min} $ on that channel. Thus, the consumed power on that channel \textcolor{black}{can be derived by solving the following equation for power:}
\begin{eqnarray} 
\log_{2} (1+\gamma_{i,j,n}) & = & r_{min} 
\end{eqnarray}
\textcolor{black}{where $ \gamma_{i,j,n} $ is defined in (\ref{SINR}) and solving for the transmit power yields to:}
\begin{eqnarray} 
\label{Effective_Interference}
p^{tx}_{i,j,n} & = & \frac{\textcolor{black}{2^{r_{min}}-1}}{\frac{h_{i,j,n}}{\sigma^2 + I_{n}}},
\end{eqnarray}
 Also from our assumption, we know that the effective interference in a denominator of (\ref{Effective_Interference}), e.g., $\frac{h_{i,j,n}}{\sigma^2 + I^{n}} $, is not the highest possible value. Hence, if we assign the highest effective interference value to this user, the total power consumption will be lower and this is in contrast with the assumption of minimum power consumption. Therefore, minimum power is consumed when maximum effective interference is the criterion for the channel allocation. In other words, with higher effective interference, less power is consumed to satisfy the minimum required rate. 
\end{proof}
Let $ EI_{i,j,n} $ be the SINR on the channel $ n $   \textcolor{black}{assuming unit transmission power}.  High $ EI_{i,j,n} $  on a channel means that the MU \textcolor{black}{can achieve high SINR with a good channel condition and low interference from other cells. Channel assignment is made based on this metric.} Therefore, the decision for channel allocation will be made based on the following criterion: 
\textcolor{black}{
\begin{eqnarray} \label{Channel_Allocation}
\tilde{a}_{i,j,\tilde{n}} = 1 \Big|_{\tilde{n} = argmax_{n} (EI_{i,j,n}) } \forall i,j.
\label{ChannelAssignmentCriteria}
\end{eqnarray}
}
Thus, a channel allocation matrix $ \boldsymbol{a}[t] $ at time $ t $, can be formed with the elements obtained from the equation (\ref{Channel_Allocation}). 

At this stage we have solved the first sub-problem and the results will be available for next steps. In the next two subsections, we solve the second sub-problem introduced in \eqref{method_1} and \eqref{method_2}.

\subsection{Sub-Problem Two: Power Allocation, and Decision Making}
In the previous subsection we have solved the problem of sub-channel assignment and therefore one of the challenges of the primary problem \eqref{Original_problem} is resolved. The results of previous subsection will be used in this section to solve the sub-problem of power allocation and decision making. As in \eqref{method_1} and \eqref{method_2}, two approaches are applied to tackle the challenges. These approaches are discussed in the following subsections.
\subsubsection{Joint Power Allocation and Decision Making (J-PAD)} \label{JPAD_Prob}
Given a sub-channel assignment, the problem of joint power allocation and data offloading can be rewritten as:

\begin{align*}
\label{main_problem_assuming_fixed_subchannel_assignment}
\min_{\{{\bf{p}},{\bf{s}}\}} & P^{Total} \numberthis \\
& \hspace{-0.5cm}\text{subject to }\\
\text{C1: } & \begin{matrix}
0 \leq s_{i,j} p_{i,j}^{Tx} \leq p_{max},&&\forall i,j,
\end{matrix} \\
\text{C2: } & \begin{matrix}
s_{i,j} \displaystyle \sum_{n=1}^{N} \log_{2} (1+\gamma_{i,j,n}) \geq s_{i,j} R_{min}, && \forall i,j,
\end{matrix} \\
\text{C3: } &\begin{matrix}
\displaystyle \sum_{j=1}^{F_i} \displaystyle \sum_{n=1}^{N} s_{i,j} \log_{2} (1+\gamma_{i,j,n})  \leq R_{\textcolor{black}{i,} max}^{Proc}, && \forall i,
\end{matrix} \\
\text{C6: } &\begin{matrix}
s_{i,j} \in \{0,1\}, && \forall i,j.
\end{matrix}
\end{align*} 

To solve \eqref{main_problem_assuming_fixed_subchannel_assignment}, we first reformulate it to a more mathematically tractable form. Since $s_{i,j}$ is a binary variable, we can write $s_{i,j} \log_{2} (1+\gamma_{i,j,n}) = \log_{2} (1+s_{i,j} \gamma_{i,j,n})$. 
Moreover, the problem consists of the product terms of $s_{i,j} p_{i,j,n}$. We use the following change of variable
\begin{equation}
\label{change_of_variable_power}
\tilde{p}_{i,j,n}=s_{i,j}p_{i,j,n},
\end{equation}
to recast the optimization problem. Also, the optimization problem includes integer variable $s_{i,j}$. Hence to convert $s_{i,j}$ into continuous variables, we can express the constraint C6 as the intersection of the following regions:
\begin{align*}
\label{new_regions}
& \begin{matrix}
\mathcal{R}_1: 0 \leq s_{i,j} \leq 1, \forall j,i,
\end{matrix}\\
& \begin{matrix}
\mathcal{R}_2: \sum_{j} \sum_{i} \left(s_{i,j}- s_{i,j}^2\right) \leq 0. &\\
\end{matrix}\numberthis
\end{align*}
\textcolor{black}{With $ \mathcal{R}_1 $ we limit $ s_{i,j} $ to be in the interval of [0,1] and with $ \mathcal{R}_2 $ we enforce $ s_{i,j} $ to approach either 0 or 1, since $  \left(s_{i,j}- s_{i,j}^2\right) $  takes  always non-negative values but the constraint pushes $  \left(s_{i,j}- s_{i,j}^2\right) $  to be non-positive. }

Hence, we can write the optimization problem of \eqref{main_problem_assuming_fixed_subchannel_assignment} as follows
\begin{align*}
\label{main_format_2}
& \min_{\tilde{{\bf{p}}},{\bf{s}}} {\bf{P}}^{Total}\\
& \text{s.t. } \text{C1--\textcolor{black}{C3}}, \mathcal{R}_1,\mathcal{R}_2. \numberthis
\end{align*}
The problem  \eqref{main_format_2} is a continuous optimization problem with respect to all variables. However, we aim to find integer solutions for $s_{i,j}$'s. To attain this goal, we add a penalty function to the objective function of \eqref{main_format_2}
to penalize it if the values of $s_{i,j}$'s are not integer. Thus, the problem can be modified to 
\begin{align*}
\label{penalized}
& \min_{\tilde{{\bf{p}}},{\bf{s}}} \hspace{0.5cm} \mathcal{L}(\tilde{{\bf{p}}},{\bf{s}},\lambda)\\
& \text{s.t. } \text{C1--\textcolor{black}{C3}},\mathcal{R}_1. \numberthis
\end{align*}
In \eqref{penalized}, $\mathcal{L}(\tilde{{\bf{p}}},{\bf{s}},\lambda)$ is the \textcolor{black}{partial} Lagrangian of \eqref{main_format_2}, and is defined as
\begin{equation}
\mathcal{L}(\tilde{{\bf{p}}},{\bf{s}},\lambda) \triangleq {\bf{P}}^{Total}+\lambda \sum_{j} \sum_{i} \left(s_{i,j}- s_{i,j}^2\right),
\end{equation}
where $\lambda$ is the penalty factor which should be $\lambda \gg 1$. It can be shown that, for sufficiently large values of $\lambda$, the optimization problem of \eqref{penalized} is equivalent to \eqref{main_format_2} and attains the same optimal value. 

\begin{prop}
For sufficiently large values of $\lambda$, the optimization problem of \eqref{penalized} is equivalent to \eqref{main_format_2}
\end{prop}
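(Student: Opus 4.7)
The plan is to invoke the standard exact penalty argument, exploiting the single algebraic fact that for $s_{i,j}\in[0,1]$ the quadratic $s_{i,j}-s_{i,j}^{2}=s_{i,j}(1-s_{i,j})$ is nonnegative and vanishes precisely at the binary values. Consequently, under $\mathcal{R}_{1}$, the scalar inequality defining $\mathcal{R}_{2}$ is a sum of nonnegative terms; it is satisfied only when \emph{every} term is zero, i.e.\ when every $s_{i,j}\in\{0,1\}$. Thus $\mathcal{R}_{1}\cap\mathcal{R}_{2}$ coincides exactly with the original binary constraint C6, which justifies the reformulation from \eqref{main_problem_assuming_fixed_subchannel_assignment} into \eqref{main_format_2}.

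Let $P^{\star}$ denote the optimal value of \eqref{main_format_2} and $P^{\star}_{\lambda}$ that of \eqref{penalized}. One direction is immediate: any $(\tilde{\mathbf{p}},\mathbf{s})$ feasible for \eqref{main_format_2} satisfies $\mathcal{R}_{2}$ so the penalty term $\lambda\sum_{i,j}(s_{i,j}-s_{i,j}^{2})$ is identically zero, giving $\mathcal{L}(\tilde{\mathbf{p}},\mathbf{s},\lambda)=\mathbf{P}^{Total}(\tilde{\mathbf{p}},\mathbf{s})$. The feasible set of \eqref{main_format_2} is therefore contained in that of \eqref{penalized} with matching objective values, so $P^{\star}_{\lambda}\le P^{\star}$ for every $\lambda\ge0$.

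The reverse inequality, $P^{\star}_{\lambda}\ge P^{\star}$, is the core of the statement and is where the requirement ``$\lambda$ sufficiently large'' enters. I would argue by contradiction: suppose an optimizer $(\tilde{\mathbf{p}}^{\lambda},\mathbf{s}^{\lambda})$ of \eqref{penalized} has $\Delta(\mathbf{s}^{\lambda}):=\sum_{i,j}(s_{i,j}^{\lambda}-(s_{i,j}^{\lambda})^{2})>0$. Because $\mathbf{P}^{Total}$ and all constraints depend smoothly on $\mathbf{s}$ on the compact box $\mathcal{R}_{1}$, the restricted problem \eqref{main_format_2} admits a Lagrange multiplier $\mu^{\star}\ge0$ attached to $\mathcal{R}_{2}$, and classical exact penalty theory (e.g.\ Bertsekas, \emph{Nonlinear Programming}, Prop.~5.4.5) guarantees that any $\lambda>\mu^{\star}$ forces every optimizer of the penalized problem to satisfy $\mathcal{R}_{2}$ with equality. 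Combining this with the first direction then yields $P^{\star}_{\lambda}=P^{\star}$ and, more importantly, an optimizer of \eqref{penalized} that is also feasible and optimal for \eqref{main_format_2}.

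The main obstacle will be producing a clean, self-contained justification of the threshold $\lambda>\mu^{\star}$ without hiding behind the external reference. A direct route is to write the KKT conditions of \eqref{penalized} and observe that the stationarity condition with respect to $s_{i,j}$ reads (schematically) $\lambda(1-2s_{i,j})=g_{i,j}(\tilde{\mathbf{p}},\mathbf{s},\boldsymbol{\mu})$, where $g_{i,j}$ gathers all the partial derivatives of $\mathbf{P}^{Total}$ and of C1--C3 with respect to $s_{i,j}$, evaluated at the candidate optimum; since the feasible set is compact and the constraints are smooth, $g_{i,j}$ is uniformly bounded by some constant $G<\infty$, so any interior $s_{i,j}\in(0,1)$ can be stationary only if $\lambda|1-2s_{i,j}|\le G$, i.e.\ $s_{i,j}$ is pushed arbitrarily close to $1/2$ as $\lambda$ grows. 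Comparing the resulting penalty cost $\lambda\cdot\Theta(1)$ against the bounded value $\mathbf{P}^{Total}$ produces a contradiction for every $\lambda$ exceeding an explicit bound depending only on $G$ and the problem data, thereby completing the argument without appeal to a black-box theorem.
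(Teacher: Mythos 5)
Your proposal is correct in its overall logic and reaches the right conclusion, but it takes a genuinely different route from the paper. Your first direction (feasible points of \eqref{main_format_2} have zero penalty, hence $P^{\star}_{\lambda}\le P^{\star}$) and your observation that $\mathcal{R}_1\cap\mathcal{R}_2$ reproduces C6 exactly are both present in spirit in the paper, which likewise relies on $\sum_{i,j}(s_{i,j}-s_{i,j}^2)\ge 0$ over $\mathcal{R}_1$. For the key reverse direction, however, the paper does not introduce a multiplier for $\mathcal{R}_2$ at all: it defines the dual function $\varphi(\lambda)=\min_{\tilde{\bf{p}},\bf{s}}\mathcal{L}(\tilde{\bf{p}},{\bf{s}},\lambda)$, notes that $\varphi$ is nondecreasing in $\lambda$ and bounded above by the optimal value of \eqref{main_format_2} via the max--min (weak duality) inequality, and then argues that if the penalty term stayed strictly positive at the optimizer, $\varphi(\lambda)$ would grow without bound, contradicting that upper bound. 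You instead invoke classical exact-penalty theory with a threshold $\lambda>\mu^{\star}$, and then sketch a KKT-stationarity argument showing any interior $s_{i,j}$ must sit near $1/2$, making the penalty cost $\Theta(\lambda)$ and yielding the same boundedness contradiction. The two arguments buy different things: the paper's duality sandwich avoids ever asserting existence of a multiplier, while your version, when it works, gives an explicit finite threshold rather than an asymptotic statement.

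One concrete caution. The appeal to the exact-penalty theorem presupposes a Lagrange multiplier $\mu^{\star}$ attached to $\mathcal{R}_2$, but $\mathcal{R}_2$ admits no Slater point relative to $\mathcal{R}_1$: the constraint function $\sum_{i,j}(s_{i,j}-s_{i,j}^2)$ is nonnegative on the whole box and is required to be nonpositive, so it is active with no strictly feasible direction at every feasible point, and standard constraint qualifications (Slater, MFCQ) fail. The multiplier you cite may therefore not exist, and the black-box theorem does not straightforwardly apply. Your self-contained KKT sketch partially repairs this but still rests on the unproved claim that the aggregate gradient term $g_{i,j}$ (which contains the multipliers of C1--C3) is uniformly bounded in $\lambda$. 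If you want to keep your route, the cleanest fix is to bypass stationarity entirely: for fixed $\tilde{\bf{p}}$ the objective of \eqref{penalized} is concave in $\bf{s}$ (linear part plus $-\lambda\sum s_{i,j}^2$), so its minimum over the polytope in $\bf{s}$ is attained at a vertex; combined with your boundedness comparison this closes the argument without any multiplier.
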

\begin{proof}
	Please see Appendix A.
\end{proof}

Now, \textcolor{black}{\eqref{main_format_2}} can be converted to the following problem 
\begin{align*}
\label{DC_joint_main_problem}
\min_{\{\tilde{{\bf{p}}},{\bf{s}}\}} & \hspace{0.1cm} \displaystyle \sum_{i=1}^{N_c} \displaystyle \sum_{j=1}^{F_i} \displaystyle \sum_{n=1}^{N} \tilde{p}_{i,j,n} + \displaystyle \sum_{i=1}^{N_c} \displaystyle \sum_{j=1}^{F_i} (1-s_{i,j}) \frac{M_{i,j} L_{i,j}^m}{T_{i,j}^m} \\
& +\lambda \big(\displaystyle \sum_{i}^{} \displaystyle \sum_{j}^{} (s_{i,j} - s_{i,j}^2)\big) \numberthis \\
& \hspace{-0.5cm}\text{subject to }\\
\text{C1: } & \begin{matrix}
0 \leq \displaystyle \sum_{n=1}^{N} \tilde{p}_{i,j,n} \leq s_{i,j} p_{max},&&\forall i,j,
\end{matrix} \\
\text{C2: } & \begin{matrix}
\displaystyle \sum_{n=1}^{N} \log_{2} (1+\frac{\tilde{p}_{i,j,n} h_{i,j,n}}{\sigma^2 + \tilde{I}_{i,n}}) \geq s_{i,j} R_{min}, && \forall i,j,
\end{matrix} \\
\text{C3: } &\begin{matrix}
\displaystyle \sum_{j=1}^{F_i} \displaystyle \sum_{n=1}^{N} \log_{2} (1+\frac{\tilde{p}_{i,j,n} h_{i,j,n}}{\sigma^2 + \tilde{I}_{i,n}}) \leq R_{\textcolor{black}{i,} max}^{Proc}, && \forall i,
\end{matrix} \\
\text{C6: } &\begin{matrix}
s_{i,j} \in [0,1], && \forall i,j,
\end{matrix}
\end{align*} 
where $\tilde{I}_{i}^{(n)} \triangleq \displaystyle \sum_{\substack{ k=1 \\ k \neq i}}^{N_c} \displaystyle \sum_{m=1}^{F_i} \tilde{p}_{k,m,n} h_{k,m,n}^{i} $. We can write the objective function in \eqref{DC_joint_main_problem} as $f_1(\tilde{\bf{p}},{\bf{s}})-f_2(\tilde{\bf{p}},{\bf{s}})$, where $f_1(\tilde{\bf{p}},{\bf{s}}) \triangleq \displaystyle \sum_{i=1}^{N_c} \displaystyle \sum_{j=1}^{F_i} \displaystyle \sum_{n=1}^{N} \tilde{p}_{i,j,n} + \displaystyle \sum_{i=1}^{N_c} \displaystyle \sum_{j=1}^{F_i} ((1-s_{i,j}) \frac{M_{i,j} L_{i,j}^m}{T_{i,j}^m} +\lambda s_{i,j})$, and $f_2(\tilde{\bf{p}},{\bf{s}}) \triangleq \lambda \sum_{i=1}^{N_c} \displaystyle \sum_{j=1}^{F_i} s_{i,j}^2$ are two convex functions. In a similar way, for $\forall i,j$, we define $z_{i,j,n}(\tilde{{\bf{p}}})$ and $q_{i,j,n}(\tilde{{\bf{p}}})$ as 
\begin{equation}
\label{z_ijn}
z_{i,j,n}(\tilde{{\bf{p}}}) \hspace{-0.2em} \triangleq \hspace{-0.2em} \log_{2} \hspace{-0.2em} \bigg( \hspace{-0.3em}\tilde{p}_{i,j,n} h_{i,j,n} \hspace{-0.2em}+ \hspace{-0.2em} \displaystyle \sum_{\substack{ k=1 \\ k \neq i}}^{N_c} \hspace{-0.2em} \displaystyle \sum_{m=1}^{F_i} \hspace{-0.2em} \tilde{p}_{k,m,n} h_{k,m,n}^{i} \hspace{-0.12em}+\hspace{-0.1em} \sigma^2 \hspace{-0.1em} \bigg), 
\end{equation}

\begin{equation}
\label{q_ijn}
q_{i,j,n}(\tilde{{\bf{p}}}) \triangleq \log_{2} \bigg( \displaystyle \sum_{\substack{ k=1 \\ k \neq i}}^{N_c} \displaystyle \sum_{m=1}^{F_i} \tilde{p}_{k,m,n} h_{k,m,n}^{i} +\sigma^2 \bigg), 
\end{equation}
then, we can write constraints \textcolor{black}{C2} and \textcolor{black}{C3} as follows
\begin{align*}
&\begin{matrix}
\text{\textcolor{black}{C2}: } \hspace{0.3cm} Z_{i,j}(\tilde{{\bf{p}}}) - Q_{i,j}(\tilde{{\bf{p}}}) \geq s_{i,j} R_{min}, \forall i,j,
\end{matrix}\\
&\begin{matrix}
\text{\textcolor{black}{C3}: } \hspace{0.3cm} Q_{i}(\tilde{{\bf{p}}})-Z_{i}(\tilde{{\bf{p}}}) \geq - R_{\textcolor{black}{i,} max}^{Proc}, \forall i,
\end{matrix} \numberthis
\end{align*}
where $Z_{i,j}(\tilde{{\bf{p}}}) \triangleq \displaystyle \sum_{n=1}^{N} z_{i,j,n}(\tilde{{\bf{p}}})$, $Q_{i,j}(\tilde{{\bf{p}}}) \triangleq \displaystyle \sum_{n=1}^{N} q_{i,j,n}(\tilde{{\bf{p}}})$, $Z_{i}(\tilde{{\bf{p}}}) \triangleq \displaystyle \sum_{j=1}^{F_i} \displaystyle \sum_{n=1}^{N} z_{i,j,n}(\tilde{{\bf{p}}})$, and $Q_{i}(\tilde{{\bf{p}}}) \triangleq \displaystyle \sum_{j=1}^{F_i} \displaystyle \sum_{n=1}^{N} q_{i,j,n}(\tilde{{\bf{p}}})$ are concave functions.
Therefore, the problem is in the form of the difference of two convex (concave) functions (D.C. programming) \cite{kha2012fast}. In D.C. programming, we start from a feasible initial point and iteratively solve the optimization problem. Let $k$ denote the iteration number. At the $k$-th iteration, to make the problem convex, using the first order Taylor approximation for $f_2(\tilde{\bf{p}},{\bf{s}})$, $Q_{i,j}(\tilde{{\bf{p}}})$ and $Z_{i}(\tilde{{\bf{p}}})$ as follows
\begin{align*}
&\begin{matrix}
\tilde{f}_2(\tilde{{\bf{p}}},{\bf{s}}) \hspace{-0.1em}\approxeq \hspace{-0.1em} f_2(\tilde{\bf{p}},{\bf{s}}^{k-1})\hspace{-0.1em} + \hspace{-0.1em} \nabla_{{\bf{s}}}f^T_2(\tilde{{\bf{p}}}, {\bf{s}}^{k-1}). ({\bf{s}}-{\bf{s}}^{k-1}),
\end{matrix}\\
&\begin{matrix}
\tilde{Q}_{i,j}(\tilde{{\bf{p}}}) \hspace{-0.1em} \approxeq \hspace{-0.1em} Q_{i,j}(\tilde{{\bf{p}}}^{k-1}) \hspace{-0.1em}+\hspace{-0.1em} \nabla_{\tilde{\bf{p}}}Q^T_{i,j}(\tilde{{\bf{p}}}^{k-1}). (\tilde{{\bf{p}}}-\tilde{{\bf{p}}}^{k-1}),
\end{matrix}\\
&\begin{matrix}
\tilde{Z}_{i}(\tilde{{\bf{p}}}) \approxeq Z_{i}(\tilde{{\bf{p}}}^{k-1})+ \nabla_{\tilde{\bf{p}}}Z^T_{i}(\tilde{{\bf{p}}}^{k-1}). (\tilde{{\bf{p}}}-\tilde{{\bf{p}}}^{k-1}), \end{matrix} \numberthis
\end{align*}
where $\tilde{{\bf{p}}}^{k-1}$ and ${\bf{s}}^{k-1}$ are the solutions of the problem at $(k-1)$-th iteration and $\nabla_x$ denotes the gradient operation with respect to $x$. Thus, at the $k$-th iteration, instead of dealing with the problem of \eqref{main_problem_assuming_fixed_subchannel_assignment}, we solve the following convex problem
\begin{align*}
\label{DC_joint_final_problem}
\min_{\{\tilde{{\bf{p}}},{\bf{s}}\}} & \hspace{0.1cm} f_1(\tilde{{\bf{p}}},{\bf{s}}) - \tilde{f}_2(\tilde{{\bf{p}}},{\bf{s}}) \numberthis \\
& \hspace{-0.5cm}\text{subject to: } \hspace{0.25cm}\text{C1, C2, C7,}\\ 
\text{\textcolor{black}{C2}: } & \begin{matrix}
Z_{i,j}(\tilde{{\bf{p}}}) - \tilde{Q}_{i,j}(\tilde{{\bf{p}}}) \geq s_{i,j} R_{min}, && \forall i,j,
\end{matrix} \\
\text{\textcolor{black}{C3}: } &\begin{matrix}
Q_{i}(\tilde{{\bf{p}}})-\tilde{Z}_{i}(\tilde{{\bf{p}}}) \geq - R_{\textcolor{black}{i,} max}^{Proc}, && \forall i.
\end{matrix}
\end{align*}
It can be shown that the D.C. programming results in a sequence of feasible solutions that iteratively achieves better solutions than previous iteration until it converges. 

\begin{prop}
The D.C. programming results in a sequence of feasible solutions that iteratively decrease the total power consumption of the network. 
\end{prop}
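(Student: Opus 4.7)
The plan is to establish a standard majorization–minimization (MM) argument tailored to the D.C. decomposition used in \eqref{DC_joint_final_problem}. The two ingredients I would exploit are (i) the fact that the first-order Taylor expansion of a convex function is a global underestimator, while that of a concave function is a global overestimator, and (ii) both approximations are tight at the expansion point. Concretely, for the convex $f_2$ I would use $\tilde f_2(\tilde{\bf p},{\bf s})\le f_2(\tilde{\bf p},{\bf s})$ with equality at $(\tilde{\bf p}^{k-1},{\bf s}^{k-1})$, and for the concave $Q_{i,j}$, $Z_i$ I would use $\tilde Q_{i,j}(\tilde{\bf p})\ge Q_{i,j}(\tilde{\bf p})$, $\tilde Z_{i}(\tilde{\bf p})\ge Z_{i}(\tilde{\bf p})$, again with equality at $\tilde{\bf p}^{k-1}$.

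The first step is to verify that $(\tilde{\bf p}^{k-1},{\bf s}^{k-1})$ is feasible for the $k$-th surrogate problem \eqref{DC_joint_final_problem}. Constraint C1 and the box constraint on ${\bf s}$ do not depend on the linearization, so they are inherited directly. For the linearized C2, at $\tilde{\bf p}=\tilde{\bf p}^{k-1}$ the approximation is exact, so $Z_{i,j}(\tilde{\bf p}^{k-1})-\tilde Q_{i,j}(\tilde{\bf p}^{k-1}) = Z_{i,j}(\tilde{\bf p}^{k-1})- Q_{i,j}(\tilde{\bf p}^{k-1}) \ge s^{k-1}_{i,j} R_{min}$, which holds by feasibility at iteration $k-1$. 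The same argument applies to C3 using $\tilde Z_{i}(\tilde{\bf p}^{k-1})=Z_{i}(\tilde{\bf p}^{k-1})$. Hence the surrogate problem is always feasible and its optimum $(\tilde{\bf p}^{k},{\bf s}^{k})$ is well-defined.

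The second step is the monotone-decrease argument. Let $\Phi\triangleq f_1-f_2$ denote the true objective and $\Phi_k\triangleq f_1-\tilde f_2$ the surrogate objective at iteration $k$. Because $\tilde f_2\le f_2$ everywhere, $\Phi_k\ge\Phi$ pointwise, with equality at $(\tilde{\bf p}^{k-1},{\bf s}^{k-1})$. Then
\begin{equation*}
\Phi(\tilde{\bf p}^{k},{\bf s}^{k}) \;\le\; \Phi_k(\tilde{\bf p}^{k},{\bf s}^{k}) \;\le\; \Phi_k(\tilde{\bf p}^{k-1},{\bf s}^{k-1}) \;=\; \Phi(\tilde{\bf p}^{k-1},{\bf s}^{k-1}),
\end{equation*}
where the first inequality is the majorization property, the second follows from optimality of $(\tilde{\bf p}^{k},{\bf s}^{k})$ for the surrogate, and the last equality is the tightness of $\tilde f_2$ at the expansion point. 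This gives the desired monotone decrease of the Lagrangian objective ${\bf P}^{Total}+\lambda\sum_{i,j}(s_{i,j}-s_{i,j}^2)$, and hence, for $\lambda$ sufficiently large so that the penalty vanishes at the limit point (by the previous proposition), of ${\bf P}^{Total}$ itself. Combined with the obvious lower bound $\Phi\ge 0$ (powers and local-processing terms are non-negative), the monotone-convergence theorem yields convergence of the objective sequence.

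The main obstacle I anticipate is the careful bookkeeping of the direction of each inequality when mixing the convex/concave pieces: the $-Q_{i,j}$ inside C2 flips with the sign of $Q_{i,j}$'s concavity, and analogously for $-Z_i$ in C3, so one must check that the linearization tightens the feasible region (never enlarges it) in exactly the way needed for the previous iterate to remain feasible. A secondary subtlety is to argue that $\lambda$ can indeed be chosen large enough so that the penalty term neither introduces unboundedness nor breaks the monotonicity — this relies on the equivalence established in the preceding proposition, which we are allowed to invoke.
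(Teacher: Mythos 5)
Your proposal is correct and follows essentially the same route as the paper: both rest on the first-order conditions showing that the Taylor linearizations globally overestimate the concave $Q_{i,j},Z_i$ (so the surrogate feasible set is an inner approximation of the original one) and underestimate the convex $f_2$, combined with the standard majorization--minimization inequality chain $\Phi(\tilde{\bf p}^{k},{\bf s}^{k})\le\Phi_k(\tilde{\bf p}^{k},{\bf s}^{k})\le\Phi_k(\tilde{\bf p}^{k-1},{\bf s}^{k-1})=\Phi(\tilde{\bf p}^{k-1},{\bf s}^{k-1})$. If anything, your version is slightly more complete, since you explicitly verify that the previous iterate remains feasible for the new surrogate problem (which the paper's monotonicity step uses implicitly when it replaces the surrogate minimum by the value at ${\bf s}^{t}$).
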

\begin{proof}
Please see Appendix  B.
\end{proof}

\subsubsection{Channel assignment, Power Allocation, and Decision Making (C-PAD)}
Similar to subsection \ref{JPAD_Prob}, we assume that channel assignment vector is given based on proposition \ref{SubChannel}. Given sub-channel assignment, the optimization problem can be rewritten as:

\begin{align*}
\min_{\{{\bf{p}}\}} & P^{Total} \numberthis \\
& \hspace{-0.5cm}\text{subject to }\\
\text{C1: } & \begin{matrix}
0 \leq s_{i,j} p_{i,j}^{Tx} \leq p_{max},&&\forall i,j,
\end{matrix} \\
\text{C2: } & \begin{matrix}
s_{i,j} \displaystyle \sum_{n=1}^{N} \log_{2} (1+\gamma_{i,j,n}) \geq s_{i,j} R_{min}, && \forall i,j,
\end{matrix} \\
\text{C3: } &\begin{matrix}
\displaystyle \sum_{j=1}^{F_i} \displaystyle \sum_{n=1}^{N} s_{i,j} \log_{2} (1+\gamma_{i,j,n}) \leq R_{\textcolor{black}{i,} max}^{Proc}, && \forall i,
\end{matrix}
\label{main_cpad_problem}
\end{align*} 
By applying the method used in previous section we can formulate the problem as a D.C. programming optimization problem. In other words, similar to (\ref{ineq1}) and (\ref{ineq2}) we have: 
\begin{align*}
\text{C3: }& Z_{i,j}(\tilde{{\bf{p}}}^{t}) - Q_{i,j}(\tilde{{\bf{p}}}^{t}) \geq R_{min},&&\forall i,j&& \numberthis 
\end{align*}
\begin{align*}
\text{C4: }& Q_{i}(\tilde{{\bf{p}}}^{t}) - Z_{i}(\tilde{{\bf{p}}}^{t})\geq - R_{\textcolor{black}{i,} max}^{Proc},&&\forall i,&& \numberthis 
\end{align*}
Applying the first order Taylor approximation, the optimization problem can be written as 
\begin{align*}
\min_{\{{\bf{p}}\}} & P^{Total} \numberthis \\
& \hspace{-0.5cm}\text{subject to }\\
\text{C1: } & \begin{matrix}
0 \leq s_{i,j} p_{i,j}^{Tx} \leq p_{max},&&\forall i,j,
\end{matrix} \\
\text{C2: } & \begin{matrix}\\
Z_{i,j}(\tilde{{\bf{p}}}^{t}) - \{Q_{i,j}(\tilde{{\bf{p}}}^{t-1}) + \nabla_{\tilde{\bf{p}}}Q^T_{i,j}(\tilde{{\bf{p}}}^{t}). (\tilde{{\bf{p}}}^{t}-\tilde{{\bf{p}}}^{t-1})\}&&\\ \\
\geq R_{min} & \hspace{-130pt} \forall i,j \end{matrix} \\
\text{C3: } & \begin{matrix}\\
Q_{i}(\tilde{{\bf{p}}}^{t}) - \{Z_{i}(\tilde{{\bf{p}}}^{t-1})+ \nabla_{\tilde{\bf{p}}}Z^T_{i}(\tilde{{\bf{p}}}^{t}). (\tilde{{\bf{p}}}^{t}-\tilde{{\bf{p}}}^{t-1})\}&&\\ \\
\geq - R_{\textcolor{black}{i,} max}^{Proc} && \hspace{-120pt} \forall i&& \end{matrix}
\label{cpad_problem}
\end{align*}

Given sub-channel assignment and power consumption vectors, offloading decisions can be made by users. 
Recall the power consumption of user $ j $ in cell $ i $ in (\ref{Local_Power_Consumption_Model}) and (\ref{Offloading_Power_Model}). Each user can compare offloading and local processing power consumption to make the decision $ s_{i,j} $ as follows:

\[ s_{i,j} =
\begin{cases} \numberthis\label{OffloadingDM}
1 & \quad P^{Local}_{i,j} > P^{Tx}_{i,j} \\
0 & \quad P^{Local}_{i,j} \leq P^{Tx}_{i,j} \\
\end{cases}
\]


\section{Algorithm Design}\label{SectionV}
In this section, based on our solutions, we propose two tractable algorithms to solve the optimization problem in  polynomial time. 
 The first algorithm fits well to a situation where information of all cells are available at the centralized unit and base stations are in charge of performing the offloading algorithms. The second algorithm suits well when offloading algorithm is performed at MUs' sides and only partial information exchange is required between base stations.

\subsection{J-PAD Algorithm } \label{J-PAD_Algorithm}
Algorithm \ref{Algorithm1} performs \textbf{J}oint \textbf{P}ower \textbf{A}llocation and \textbf{D}ecision making and is called \textbf{J-PAD}. J-PAD is designed to solve the convex optimization problem presented in (\ref{DC_joint_final_problem}). Here, the key idea is to make decision and allocate power simultaneously, while channels are assigned beforehand.   Algorithm \ref{Algorithm1} represents the procedure of solving the optimization problem using J-PAD algorithm \textcolor{black}{which starts  from random initial points for power allocation, channel assignment, and decision variable vectors}. 

\begin{algorithm}
\caption{ Joint Power Allocation and Decision Making (\textbf{J-PAD}) Algorithm}
\label{Algorithm1}
\begin{algorithmic}[1] 
\item Initialize power, $ \bf{a} $, $ \bf{s} $, $ I_{max}, \lambda=1, $ and $ Counter = 0 $
\While{ $ Counter \leq I_{max} $} {}
\\ \textbf{Channel Allocation}
\item Calculate $EI_{i,j,n} $ based on (\ref{Channel_Allocation}) $\forall i,j,n$
\item Form $\boldsymbol{a}[t] $ based on $EI_{i,j,n} $
\\ \textbf{Power Allocation and Offloading Decision}
\For{i=1 to $ N_c $}{}
\begin{enumerate}[label=\itshape\alph*\upshape)] 
\item \hspace{11pt} Solve the problem (\ref{DC_joint_final_problem}) using interior point method \cite{boyd2004convex}
\item \hspace{11pt} Update Power Vector based on the solution of (\ref{DC_joint_final_problem})
\item \hspace{11pt} Update $ s_{k,u,n} $ according to the solution of (\ref{DC_joint_final_problem})
\end{enumerate} 
\EndFor 
\item Update $ \lambda $, $ Counter = Counter + 1 $
\item Centralized unit updates the $ I $ based on (\ref{Interferenec}) and sends this value back to the base stations.
\EndWhile 
\end{algorithmic}
\end{algorithm}

J-PAD algorithm is composed of two main sections, channel assignment, based on the equation (\ref{Channel_Allocation}), and power allocation and offloading decision. After performing the second part, the power vectors and offloading decisions are updated at each base station and will be sent to the centralized unit. Then the centralized unit  updates the interference value on each channel and sends them back to each base station for next iteration. The problem is solved at the base station where the offloading algorithm is performed.  

Besides, $ \lambda $ plays an important role in the performance of J-PAD algorithm. It is a penalty factor to punish the objective function for any value of offloading decision variable $ s $ that is not binary. Therefore $ \lambda $ should be large enough, e.g., $ 10^5, $ ($ \lambda \gg 1 $) \cite{che2014joint}, to penalize the objective. One can fix this value to a predetermined high value but here we first set the $ \lambda $ to a relatively low value ( $ \lambda > 1 $ ). In this case, we let the value of $ s $ be a real value in [0,1]. By applying this approach, in the first steps, we do not penalize the objective too much, so that the algorithm has more freedom at choosing $ s $ and power. Then in next iterations we tighten the condition on $ s $ by choosing larger $ \lambda $ to make sure that the final value for $ s $ is binary. 
The benefit for using such a method is to deal with the trade-off between convergence and achieving optimal value. With a small value for $ \lambda $, \textcolor{black}{J-PAD} algorithm has more freedom to choose a non-binary value for offloading decision and therefore can find the value for power easier. With larger value for $ \lambda $, the algorithm focuses on penalizing the objective function due to non-binary offloading decision value. Because we have already found out the power value (in previous iterations with smaller $ \lambda $), the offloading decision variable converges fast to the binary value \cite{che2014joint}. 

\subsection{C-PAD Algorithm}

In this section, we propose an alternative algorithm to J-PAD which has less complexity and  the decision making process can be moved to the MUs’ side instead of the BS. In this situation, MUs only need partial information from other cells.

To avoid the integer inherent of the problem, we assign channels and make offloading decision iteratively before allocating the power. Hence, we divide the algorithm into three main parts: $ 1) $ Channel allocation which is done based on the Eq. (\ref{Channel_Allocation}), $ 2) $ Offloading decision which is performed by comparing the alternative solutions power consumption according to the Eq. \eqref{OffloadingDM}, and $ 3) $ Power allocation. In the latter part, channel allocation and offloading decisions are not optimization variables anymore because they are known for each user beforehand. Therefore, this algorithm performs \textbf{C}hannel allocation, \textbf{P}ower \textbf{A}llocation and \textbf{D}ecision making iteratively and is called \textbf{C-PAD} algorithm. The procedure of finding the solution with C-PAD algorithm is presented in Algorithm \ref{Algorithm2}.

\begin{algorithm}
\caption{ Channel allocation, Power Allocation and Decision Making (\textbf{C-PAD}) Algorithm }
\label{Algorithm2}
\begin{algorithmic}[1] 
\item Initialize initial points, $ I_{max}, \lambda, $ and $ Counter = 0 $
\While{ $ Counter \leq I_{max} $} {}
\\ \textbf{Channel Allocation}
\item Calculate $EI_{i,j,n} $ based on (\ref{Channel_Allocation}) $\forall i,j,n$
\item Form $\boldsymbol{a}[t] $ based on $EI_{i,j,n} $
\\ \textbf{Offloading Decision}
\item Determine the offloading decision based on (\ref{OffloadingDM}) for each user.
\item Update the channel allocation and offloading decision vector.
\\ \textbf{Power Allocation}
\For{i=1 to $ N_c $}{}
\begin{enumerate}[label=\itshape\alph*\upshape)] 
\item \hspace{15pt} Solve the problem (\ref{DC_joint_main_problem}) with a given channel allocation and offloading decision vector using interior point method \cite{boyd2004convex}
\item \hspace{15pt} Update Power Vector based on the solution found from (\ref{DC_joint_main_problem})
\end{enumerate} 
\EndFor 
\item $ Counter = Counter + 1 $
\item Centralized unit updates the parameter $ I $ based on (\ref{Interferenec}) and sends this value back to the base stations and base stations distribute it to the users.
\EndWhile 
\end{algorithmic}
\end{algorithm}

In  Algorithm \ref{Algorithm2} the channel allocation scheme is the same as Algorithm \ref{Algorithm1}. For offloading section, each user compares its power consumption for two possible cases, e.g., local processing or offloading and makes decision accordingly. Given \textcolor{black}{the power consumption values}, the problem of power minimization can be solved.
This segmentation enables us to perform the algorithm at users' side. In other words, the second algorithm is a distributed scheme with very low data exchange requirements at the expense of losing optimality. Centralized unit sends information about interference to each base station and the base stations relay this information to the users. Afterwards, users can use their local information and make their decisions. The procedure will continue until the convergence criteria is met. The computational complexity of the proposed algorithms will be discussed and compared in the next section.

\subsection{Complexity Analysis}
In this section, we investigate the computational complexity of our proposed algorithms. In both J-PAD and C-PAD, to assign sub-channels to the users, we have to find the user with highest effective interference. Let $F$ denote the maximum number of users existing in a cell, i.e., $F= \max\limits_{i=1,\ldots, N_c} F_{i}$. Since finding the maximum of a set with $K$ elements requires $O(K)$ operations, the sub-channel assignment phase has the complexity order of $O(NFN_{c})$. For  data offloading and power allocation in J-PAD algorithm, we have totally $N_c F (N+1)$ decision variables and $N_c (3F + N +1)$ convex and linear constraints. 
 Therefore, the computational complexity of solving a joint data offloading and power allocation problem is given by \[O((N_c F (N+1))^3 (N_c (3F + N +1))) \approx O(N_c^4 F^3 N^3 (3F + N )) \]

In C-PAD algorithm,  data offloading and power allocation are separated. To find a data offloading strategy, it is sufficient to compare the power consumption in cases that each user uses its processor or sends its data to the cloud and select the one with lowest power consumption. Since, we have to carry this out for all users in all cells, we need $O(N_c F)$ operations. For the power allocation, we have totally $N_c F N$ variables and $N_c (2F+N+1)$ linear and convex constraints. Similar to what has been presented for the first approach, the power allocation computational complexity has the order of $O(N_c^4 F^3 N^3 (2F+N))$. 
The computational complexity of proposed methods is summarized in table \ref{table_1}. 

\begin{table}[ht]
\caption{Computational Complexity of proposed approaches.}
\label{table_1}
\centering
\scriptsize
\addtolength{\tabcolsep}{-3pt}
\begin{tabular}{| c | p{1.8cm} | l | l |}
\hline
& {\bf{Sub-channel Assignment}} & {\bf{Data Offloading}} & {\bf{Power Allocation}} \\
\hline
{\bf{J-PAD}} & $O(NFN_{c})$ & \multicolumn{2}{c|}{$O \bigg(N_c^4 F^3 N^3 (3F + N ) \bigg)$} \\ 
\hline
\bf{C-PAD} & $O(NFN_{c})$ & $O(N_c F)$ & $O \bigg(N_c^4 F^3 N^3 (2F+N)\bigg)$ \\
\hline
\end{tabular}
\addtolength{\tabcolsep}{3pt}
\end{table}

\textcolor{black}{The computational complexity reported  in Table \ref{table_1} is the worst case time complexity. 
Running such algorithms requires both time and energy which are functions of the number of required operations and processing speed of devices. However, the required time 
	\footnote{\textcolor{black}{For a user running C-PAD locally, we need to operate roughly $ (N_c F N)^3 (2F+N)$ operations. Using given values in Table \ref{table_2}, e.g., $N= 25 $ $N_c= 7 $  $F=5 $, roughly we need about $ 23.4e9 $ operations. Based on the \cite{FLOPs}, today's mobile phone CPUs can support about 2 Tera operation per second which suggest that  the delay due to running the algorithm is  within 10 msec. Note that this delay may limit the generality and operability of the algorithm for very stringent delay thresholds.  In our study, we assume that the delay threshold of each application is counted after running the algorithms. This is an acceptable assumption, since the range of considered delay is between $100 ms$ and $500 ms$ according to the scenarios we target.}} 
and 
energy \footnote{ \textcolor{black}{According to \cite{carroll2010analysis}, different usage scenarios consume energy between $ (.3 - 1)\tau$  $ Joule$, where $ \tau $ is the running time. Based on our calculation  the algorithm is running within about $ \tau =10$ msec  which results in $ (3 - 10)$   $m Joule$ energy consumption.}}
 to run our proposed algorithms are negligible compared to the delay thresholds and total power consumption of devices.}

\section{Simulation Results} \label{SectionVI} 
\subsection{System Parameters}
In this section we evaluate the performance of the proposed algorithms using numerical studies after defining the system parameters and baseline cases.
The scenario as depicted in \figurename \ref{SystemModel1} is a multi-cell mobile network where each base station is equipped with a computing server. The simulation parameters and their corresponding values are summarized in Table \ref{table_2}. We assume that each cell can serve up to $ F_i $ users and their QoS is defined as a maximum acceptable delay. The carrier frequency is set to $ 2 GHz $ and thermal noise is considered as a zero mean Gaussian random variable with a variance of $ \sigma^2 $ and a power spectral density of $ N_{0} =$~$-174 dbm/Hz $, so $ \sigma^{2} = (W/N)N_0 $. The pathloss model is adopted from \cite{access2010further}, shadow fading is modeled as zero mean log normal distribution with variance of $ 8 db$, and Rayleigh fading is modeled as a unit-mean exponential distribution. Each cell has a coverage radius of $ 500m $ and users are distributed uniformly within a cell coverage. 

\begin{table}[ht]
\centering
\bgroup
\caption{Simulation parameter values} \label{table_2}
\def\arraystretch{1.2}
\begin{tabular}{ m{4.5cm} l l }
\hline
\textbf{Definition} & \textbf{Notation} & \textbf{Value}\\ \hline
Sub-carrier bandwidth & $ B $& $ 200 $ $ KHz $ \\ 
Number of sub-carrier & $ N $ & $ 25 $ \\ 
Number of cells & $ N_c $ & $ 7 $ \\ 
Number of active MUs & $ F_i $ & $ 5 $ \\ 
Circuit power & $ P_c $ & $ 100$ $mW $ \\ 
Power amplifier efficiency & $ \eta $ & $ 0.4 $ \\ 
Scaling factor power & $ m $ & $ 3 $ \\ 
Noise power spectral density & $ N_{0} $& $ -174 $ $ dbm/Hz $ \\
Maximum transmit power of users & $ P_{max} $& $ 23 $ $ dbm $ \\ 
Maximum delay of user j in cell i & $ T_{i,j} $ & $ 100 $ $ ms $ \\ 
Average bit stream size of user j in cell i & $ L_{i,j} $ & $ 2000 $ $ bits $ \\ 
\end{tabular}
\egroup
\end{table}

We have compared our results with two baseline cases to understand the main reasons behind the power savings: whether the saving is  dominated more by the offloading decisions or it stems from power control on each channel. In the first one, all MUs use local processing and nobody offloads  data to the cloud. Comparing with this scheme, we can observe how much power saving can be obtained by utilizing the proposed algorithms.  The
second baseline is equal power allocation. In this scheme, power is equally allocated on user's assigned channels such that the required QoS is satisfied. For equal power allocation, starting from zero, we increase the power level and check the constraints until all of them are satisfied. According to the given power allocation, channel assignment is performed based on \eqref{Channel_Allocation}. Through the comparison of these schemes, we can find out the amount of power saving related to the power adjustment on each channel. 
\subsection{Simulation Results}
\begin{figure}[t!]
\centering
\includegraphics[scale = .45]{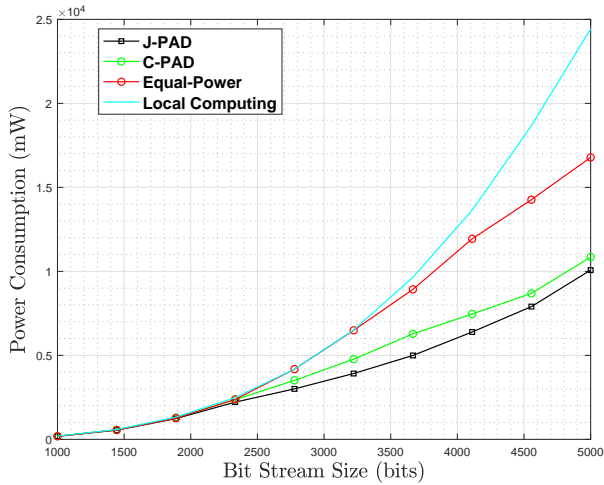}
\caption{ Aggregate power consumption for different bit stream sizes }
\label{streamAggregatePower}
\end{figure}
The  power consumption of all users over different bit stream sizes is depicted in \figurename \ref{streamAggregatePower}. The larger \textcolor{black}{the} bit stream size, the more power is consumed meanwhile the gap between  power consumption of local computing and proposed algorithms increases.

\begin{figure}[t!]
\centering
\includegraphics[scale = .6]{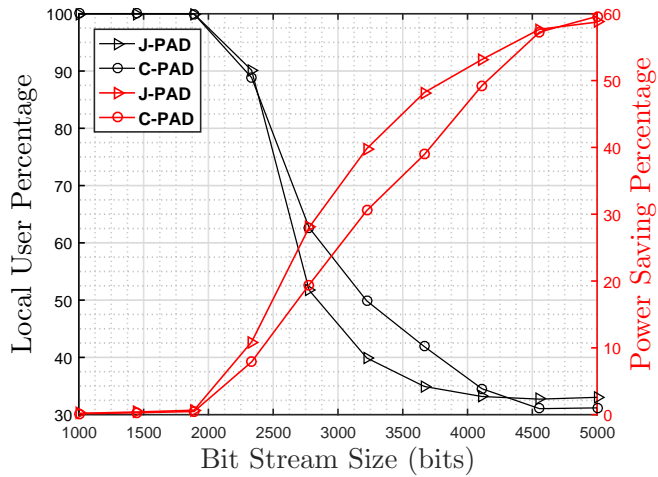}
\caption{Power saving  against Local Computing and  percentage of local computing users }
\label{streamPSaving}
\end{figure}

\figurename \ref{streamPSaving} illustrates how J-PAD and C-PAD could help users to offload and how much power is saved. As can be seen from the figure, by increasing the bit stream size, the percentage of local computing users decreases. The reason is that local processing of the large  bit stream size results in higher power consumption in comparison with sending  data to the cloud. Therefore, confirmed by simulations depicted in \figurename \ref{streamPSaving}, users tend to use the alternative option, e.g., offloading, to save power. For large bit stream sizes, using J-PAD and C-PAD, about $ 30\% $ of users decide on local processing and $ 60\% $ power saving is attained in comparison with local computing base line. Comparing J-PAD and C-PAD, J-PAD slightly outperforms C-PAD in terms of power saving at most $ 20\% $, while it has more complexity. 

\begin{figure}[t!]
\centering
\includegraphics[scale=.6]{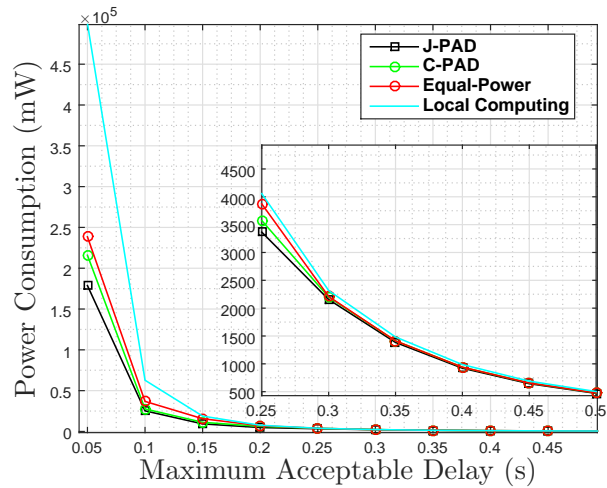}
\caption{Aggregate power consumption for different acceptable delays}
\label{delayAggregatePower}
\end{figure}

The maximum acceptable delay as a QoS requirement is another parameter that affects the power consumption and offloading decisions. In \figurename \ref{delayAggregatePower}, we have investigated the impact of delay  on our algorithms. Longer acceptable delay for offloading users means lower data rate requirement and consequently lower power consumption for  data transmission. Also for local computing users, this delay results in a lower power consumption confirmed by the model. The gap between the proposed algorithms and the benchmark is wide at the beginning and becomes tighter as maximum acceptable delay gets longer. To discover why, we have illustrated the percentage of the local computing users and the corresponding power savings in \figurename \ref{delayPSaving}. For short delays, the power consumption is relatively high which, decays as the acceptable delay becomes longer. It can be seen that users, based on the mobile devices power model, prefer local processing when they can tolerate the considerable delay. Using J-PAD or C-PAD for short delays, about $ 65 \% $ power saving is obtained while for long delays, i.e., when the processing delay requirement can be relaxed, power savings from offloading is diminished because local processing power drops down exponentially with the processing delay due to the power model in Eq. \eqref{Local_Power_Consumption_Model}, resulting in almost all users processing locally. 


\begin{figure}[t!]
\centering
\includegraphics[scale = .6]{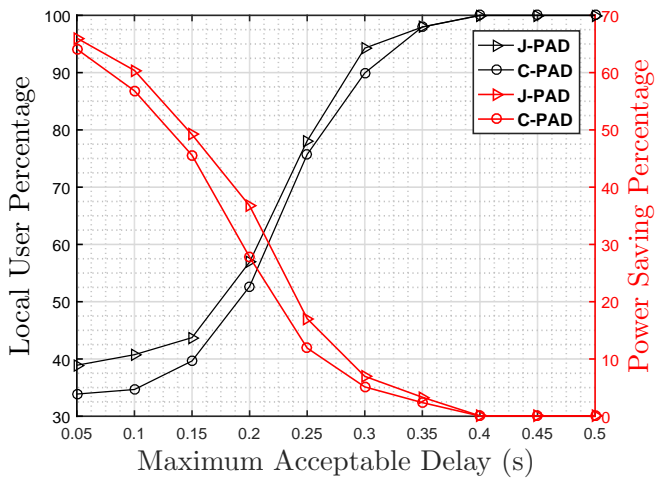}
\caption{Power saving against Local Computing and percentage of local computing users for different acceptable delays}
\label{delayPSaving}
\end{figure}

Users' offloading decision not only relies on the power model but also depends on the other users' decisions due to the interference coming from neighboring cells. Consequently, the number of active users in the network is also crucial. \figurename \ref{userAggregatePower} and \figurename \ref{userPSaving} address this issue. The more users  that exist in the network, the higher interference  created, which means lower SINR, lower data rate and consequently more experienced delay for the users. As a result, the percentage of local users (not necessarily the absolute number) increases with the increasing number of users.

In order to investigate the impact of interference, we relaxed the interference constraint and solved the problem optimally in \figurename \ref{userAggregatePower}. \textcolor{black}{In case of small number of users, due to lower interference imposed by others, the interference level is negligible and hence  it has no impact on the solution. However, as the number of users increases,  interference becomes critical and consequently considering the interference-free scenario results in wrong decision.  Implementation of the interference-free scenario, not only reveals the impact of interference but also provides us a lower bound on the optimal solution of the problem. This solution is lower bound since the interference is not considered and hence with lower power consumption, the rate constraints are satisfied.  For small number of users, this lower bound coincides with our solution. For large number of users, however, this solution is not feasible, it can provide us a lower bound on the optimal solution. Please note that the 'Local Computing' simulation is an upper bound for the problem since all users can process their data locally in the worst case.  For instance for high load, e.g., $ 10 $ users in  a cell, proposed algorithms could still achieve about $ 30\% $ power saving in comparison with local computing base line.}


\begin{figure}[t!]
\centering
\includegraphics[scale = .46]{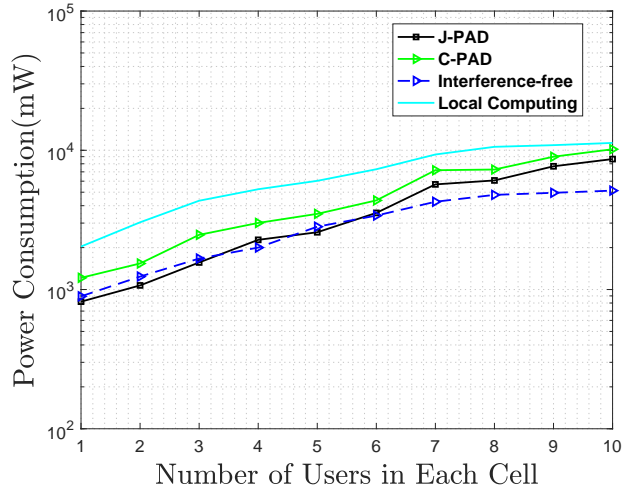}
\caption{Aggregate power consumption over different number of users }
\label{userAggregatePower}
\end{figure}
\begin{figure}[t!]
\centering
\includegraphics[scale = .6]{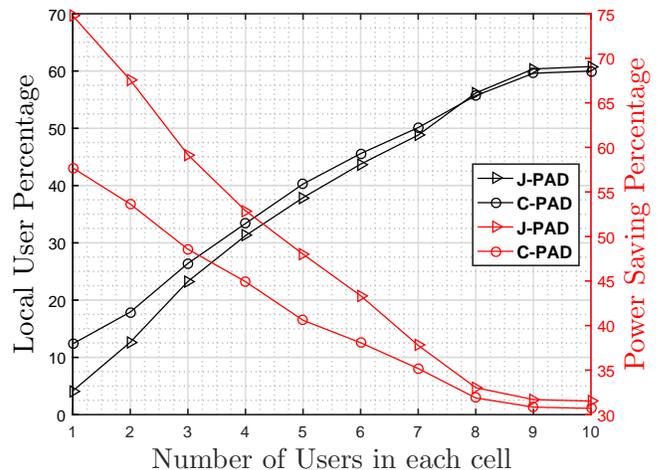}
\caption{ Power saving against Local Computing and percentage of local computing users for different number of users}
\label{userPSaving}
\end{figure}

\begin{figure*}[t!]
\begin{subfigure}[]{0.5\textwidth}
\centering
\includegraphics[scale = .35]{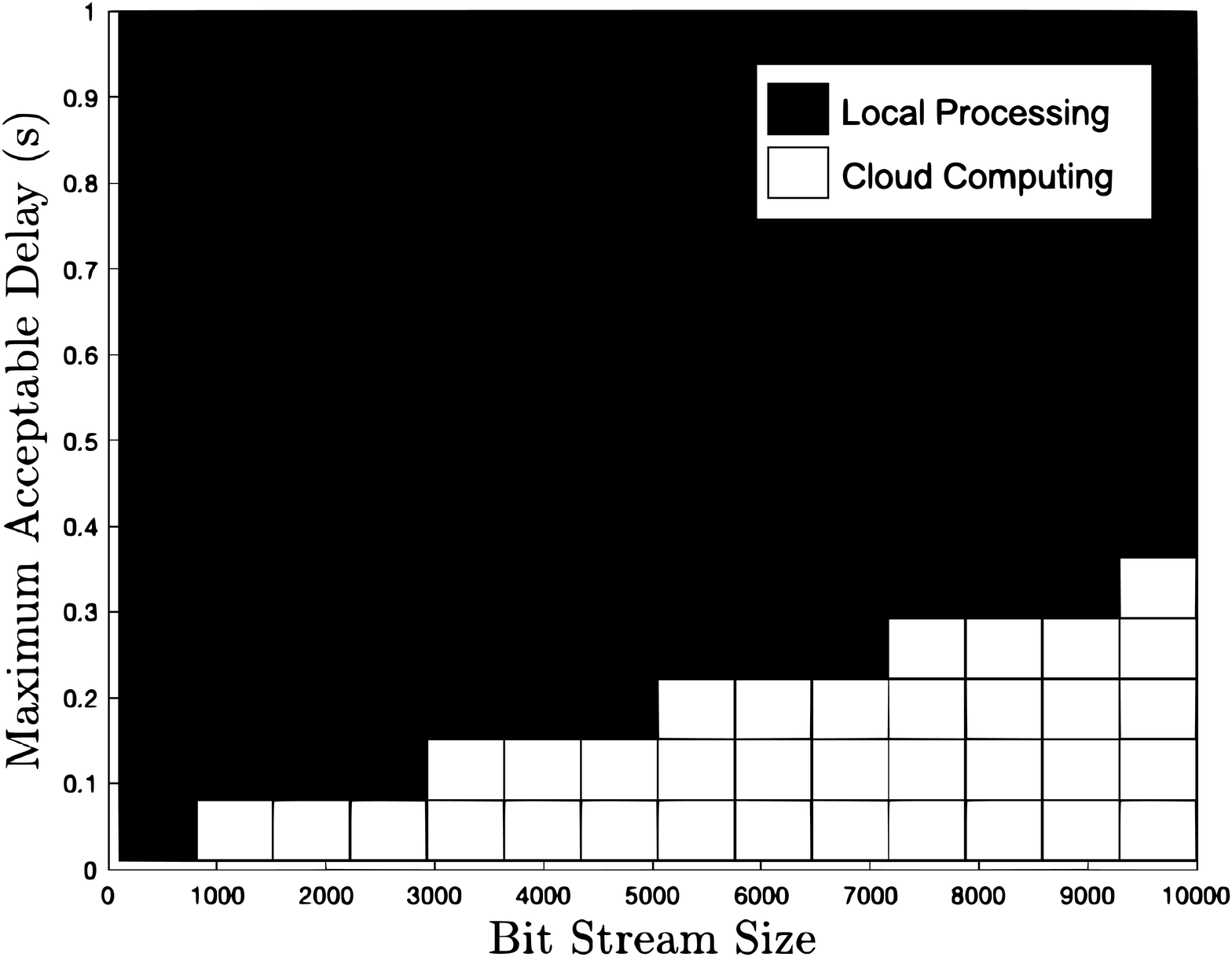}
\caption{J-PAD offloading regions for normal user }
\label{regionAlg1Normal}
\end{subfigure}
\begin{subfigure}[]{0.5\textwidth}
\centering
\includegraphics[scale = .35]{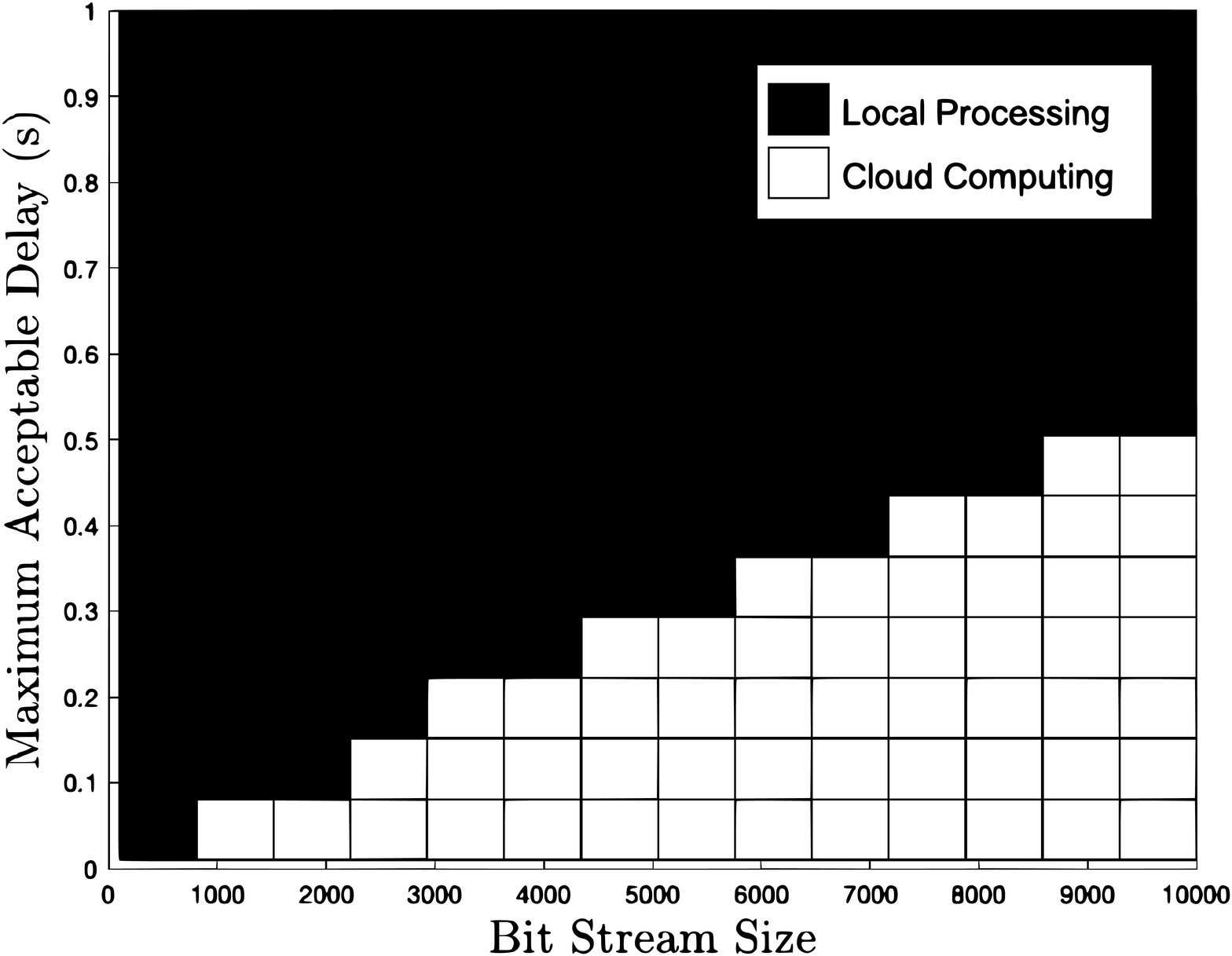}
\caption{C-PAD offloading regions for normal user }
\label{regionAlg2Normal}
\end{subfigure}
\caption{Offloading regions for J-PAD and C-PAD}
\label{offloadRegion}
\end{figure*}
In \figurename \ref{offloadRegion}, we investigate the offloading region for normal and cell-edge users to find out when offloading can save power. 

For normal users, non-cell edge users, one can see that for large bit stream size and low acceptable delay, e.g., yellow region in the figures, J-PAD and C-PAD can help mobile devices to save power. For fixed delay, by enlarging the bit stream size we end in  the offloading region. Moreover, C-PAD has a wider region than J-PAD because the offloading decision is made before solving the optimization problem. Our simulation results also reveal that cell edge users with poor channel gain and SINR cannot  benefit from offloading to the cloud. Because with bad channel condition, users need more power than local processing to send data to the cloud at an acceptable rate to meet the delay requirements. Here, providing users with better SINR, e.g., using joint transmission, might be helpful.
%
%

\ifCLASSOPTIONcaptionsoff
\newpage
\fi

\section{Conclusion} \label{SectionVII}
In this paper, the power minimization of mobile devices as a crucial aspect of mobile edge computing networks is considered. Accordingly, an optimization problem aimed at minimizing the power consumption of all users is formulated. To take into account the users' quality of service, maximum tolerable delay of users are considered. Knowing the inherent non-convexity of
our primary problem, we applied the D.C. approximation to
transform the non-convex problem to a convex one. We proposed two algorithms, called J-PAD and C-PAD to solve the problem in polynomial time. The J-PAD algorithm is better than C-PAD in terms of power saving but at the cost of complexity; therefore, it is not suitable to be used in the mobile terminal but in the BSs with high processing resources. C-PAD has the advantage of running at the users' side at the cost of losing the optimality. 
Our simulations demonstrated that there exists an offloading region for non-cell edge users where they can benefit from offloading data to the cloud.
Finally, confirmed by our results, significant enhancement in terms of power consumption of mobile devices could be achieved using the proposed algorithms.


%

\appendices
\section{Proof of Proposition 2}

	We start with this point that the optimization problem of \eqref{penalized} can be expressed as $\min\limits_{\tilde{{\bf{p}}},{\bf{s}}} \max \limits_{\lambda} \mathcal{L}(\tilde{{\bf{p}}},{\bf{s}},\lambda)$ and its dual problem can be written as $\max \limits_{\lambda} \min\limits_{\tilde{{\bf{p}}},{\bf{s}}} \mathcal{L}(\tilde{{\bf{p}}},{\bf{s}},\lambda)$. Suppose that $\tilde{{\bf{p}}}_{\lambda}$, ${\bf{s}}_{\lambda}$, and $\varphi (\lambda)$ denote the optimal solution and the optimal value of  the optimization problem of \eqref{penalized}, respectively, i.e.
	\begin{equation}
	\label{varphi_definition}
	\varphi (\lambda) = \mathcal{L}(\tilde{{\bf{p}}}_{\lambda},{\bf{s}}_{\lambda},\lambda)=\min\limits_{\tilde{{\bf{p}}},{\bf{s}}} \mathcal{L}(\tilde{{\bf{p}}},{\bf{s}},\lambda)
	\end{equation}
	Then, we will have
	\begin{align*}
	\label{max_varphi_definition}
	\max\limits_{\lambda} \varphi (\lambda) =\max\limits_{\lambda} \mathcal{L}(\tilde{{\bf{p}}}_{\lambda},{\bf{s}}_{\lambda},\lambda)= \max\limits_{\lambda} \min\limits_{\tilde{{\bf{p}}},{\bf{s}}} \mathcal{L}(\tilde{{\bf{p}}},{\bf{s}},\lambda) \\
	\leq \min\limits_{\tilde{{\bf{p}}},{\bf{s}}} \max\limits_{\lambda} \mathcal{L}(\tilde{{\bf{p}}},{\bf{s}},\lambda) \numberthis
	\end{align*}
	\textcolor{black}{and hence  problem \eqref{max_varphi_definition} becomes equivalent to \eqref{main_format_2}.}
	Recall that for $\forall ~{\bf{s}} \in \mathcal{D},\mathcal{R}_1$, we have
	\begin{equation*}
	\sum_{i} \sum_{j} \left( s_{i,j}- s_{i,j}^2\right) \geq 0.
	\end{equation*}
	In other words, $\varphi(\lambda)$ is an increasing function in $\lambda$ and according to \eqref{max_varphi_definition}, is bounded by the optimal value of problem \eqref{main_format_2}. If for some $0 \leq \lambda^* < \infty$, $\sum_{i} \sum_{j} \left( s_{i,j}- s_{i,j}^2\right) = 0$, then $(\tilde{{\bf{p}}}_{\lambda^*},{\bf{s}}_{\lambda^*})$ is feasible for the main problem, too. As a result, we will have
	\begin{align*}
	\label{aaa}
	\varphi(\lambda^*)=\mathcal{L}(\tilde{{\bf{p}}}_{\lambda^*},{\bf{s}}_{\lambda^*},\lambda^*) = \max\limits_{\lambda} \mathcal{L}(\tilde{{\bf{p}}}_{\lambda^*},{\bf{s}}_{\lambda^*},\lambda) \\
	\geq \min\limits_{\tilde{{\bf{p}}},{\bf{s}}} \max\limits_{\lambda} \mathcal{L}(\tilde{{\bf{p}}}_{\lambda},{\bf{s}}_{\lambda},\lambda) \numberthis
	\end{align*}
	comparing \eqref{aaa} and \eqref{max_varphi_definition}, we conclude that the strong duality holds and we have
	\begin{equation}
	\varphi(\lambda^*)=\max\limits_{\lambda} \varphi (\lambda),
	\end{equation}
	since $\varphi(\lambda)$ is a monotonically increasing function with respect to $\lambda$, for $ \forall \lambda \geq \lambda^*$ we have 
	\begin{equation}\label{eq:phi}
	\varphi(\lambda)={\bf{p}}^{Total} (\tilde{{\bf{p}}}^{\lambda^*}, {\bf{s}}^{\lambda^*})=\min\limits_{\tilde{{\bf{p}}},{\bf{s}}} \max\limits_{\lambda} \mathcal{L}(\tilde{{\bf{p}}},{\bf{s}},\lambda)
	\end{equation}
	\textcolor{black}{and hence  problem \eqref{eq:phi} becomes equivalent to \eqref{main_format_2}.}
	At the optimal point and for the second case where we have $\sum_{i} \sum_{j} \left(s_{i,j}- s_{i,j}^2\right) >0$, $\varphi(\lambda^{*})$ goes to $\infty$ because of the monotonicity of the $\varphi(\lambda)$ with respect to the $\lambda$. This contradicts the max-min inequality which states that $\varphi(\lambda^{*})$ is bounded from above. Thus, the term $\sum_{i} \sum_{j} \left(s_{i,j}- s_{i,j}^2\right)$ should be zero, and the results of the first case hold.

\section{Proof of Proposition 3 }
To show that our solutions are feasible for the original problem, first, we notice that the solution of the approximated problem in the $i$-th iteration must satisfy the constraint C2 and C3, i.e.,
\begin{align*}\label{ineq1}
& Z_{i,j}(\tilde{{\bf{p}}}^{t}) - \tilde{Q}_{i,j}(\tilde{{\bf{p}}}^{t})=&&&&\\
& Z_{i,j}(\tilde{{\bf{p}}}^{t}) - \{Q_{i,j}(\tilde{{\bf{p}}}^{t-1})+ \nabla_{\tilde{\bf{p}}}Q^T_{i,j}(\tilde{{\bf{p}}}^{t}). (\tilde{{\bf{p}}}^{t}-\tilde{{\bf{p}}}^{t-1})\}&&\\
& \geq R_{min}, \numberthis 
\end{align*}
\begin{align*}\label{ineq2}
& Q_{i}(\tilde{{\bf{p}}}^{t}) - \tilde{Z}_{i}(\tilde{{\bf{p}}}^{t})=&&&&\\
& Q_{i}(\tilde{{\bf{p}}}^{t}) - \{Z_{i}(\tilde{{\bf{p}}}^{t-1})+ \nabla_{\tilde{\bf{p}}}Z^T_{i}(\tilde{{\bf{p}}}^{t}). (\tilde{{\bf{p}}}^{t}-\tilde{{\bf{p}}}^{t-1})\}&&\\
& \geq - R_{max}^{Proc}, \numberthis 
\end{align*}
On the other hand, since $Z_{i,j}$ and $Q_{i}$ are two concave functions with respect to $\tilde{{\bf{p}}}$, due to the first order condition for the concave functions \cite{boyd2004convex}, we have
\begin{equation}
\label{first_order_condition_convex_Z_ij}
Z_{i,j}(\tilde{{\bf{p}}}) \leq Z_{i,j}(\tilde{{\bf{p}}}^{t-1}) + \nabla_{\tilde{{\bf{p}}}} Z_{i,j}^T (\tilde{{\bf{p}}}^{t-1}).(\tilde{{\bf{p}}}-\tilde{{\bf{p}}}^{t-1}).
\end{equation}
\begin{equation}
\label{first_order_condition_convex_Q_i}
Q_{i}(\tilde{{\bf{p}}}) \leq Q_{i}(\tilde{{\bf{p}}}^{t-1}) + \nabla_{\tilde{{\bf{p}}}} Q_{i}^T (\tilde{{\bf{p}}}^{t-1}).(\tilde{{\bf{p}}}-\tilde{{\bf{p}}}^{t-1}).
\end{equation}
Substituting $\tilde{{\bf{p}}}=\tilde{{\bf{p}}}^{t}$ into \eqref{first_order_condition_convex_Z_ij} and \eqref{first_order_condition_convex_Q_i} results in
\begin{equation}
Z_{i,j}(\tilde{{\bf{p}}}^t) \leq Z_{i,j}(\tilde{{\bf{p}}}^{t-1}) + \nabla_{\tilde{{\bf{p}}}} Z_{i,j}^T (\tilde{{\bf{p}}}^{t-1}).(\tilde{{\bf{p}}}^t-\tilde{{\bf{p}}}^{t-1}). 
\end{equation}
\begin{equation}
Q_{i}(\tilde{{\bf{p}}}^t) \leq Q_{i}(\tilde{{\bf{p}}}^{t-1}) + \nabla_{\tilde{{\bf{p}}}} Q_{i}^T (\tilde{{\bf{p}}}^{t-1}).(\tilde{{\bf{p}}}^t-\tilde{{\bf{p}}}^{t-1}).
\end{equation}
From \eqref{ineq1} and \eqref{ineq2}, we conclude that
\begin{align*}
& Z_{i,j}(\tilde{{\bf{p}}}^{t}) - Q_{i,j}(\tilde{{\bf{p}}}^{t}) \geq &&&&\\
& Z_{i,j}(\tilde{{\bf{p}}}^{t}) - \{Q_{i,j}(\tilde{{\bf{p}}}^{t-1})+ \nabla_{\tilde{\bf{p}}}Q^T_{i,j}(\tilde{{\bf{p}}}^{t}). (\tilde{{\bf{p}}}^{t}-\tilde{{\bf{p}}}^{t-1})\}&&\\
& \geq R_{min}, \numberthis 
\end{align*}
\begin{align*}
& Q_{i}(\tilde{{\bf{p}}}^{t}) - Z_{i}(\tilde{{\bf{p}}}^{t})\geq &&&&\\
& Q_{i}(\tilde{{\bf{p}}}^{t}) - \{Z_{i}(\tilde{{\bf{p}}}^{t-1})+ \nabla_{\tilde{\bf{p}}}Z^T_{i}(\tilde{{\bf{p}}}^{t}). (\tilde{{\bf{p}}}^{t}-\tilde{{\bf{p}}}^{t-1})\}&&\\
& \geq - R_{max}^{Proc}. \numberthis 
\end{align*}
Thus, the solution for the approximated problem is feasible for the original problem too. Now, we show that the total power consumption of the network will decrease iteratively. Since $g({\bf{s}})$ is a convex function, due to the first order condition for the convex functions \cite{boyd2004convex}, we have
\begin{equation}
\label{first_order_condition_convex_g}
g({\bf{s}}) \geq g({\bf{s}}^0) + \nabla_{{\bf{s}}} g^T ({\bf{s}}^{t-1}).({\bf{s}}-{\bf{s}}^{0}).
\end{equation}
Using \eqref{first_order_condition_convex_g} and considering the fact that the objective function of \eqref{DC_joint_main_problem} can be written as $f({\bf{s}})-g({\bf{s}})$, at the $(t+1)$-th iteration we have
\begin{align*}
\label{new}
& \begin{matrix}
f({\bf{s}}^{t+1})-g({\bf{s}}^{t+1}) \leq f({\bf{s}}^{t+1}) - g({\bf{s}}^{t}) -\nabla_{\hspace{-0.6mm}{\bf{s}}} g^T \hspace{-0.3mm} ({\bf{s}}^{t}\hspace{-0.5mm}).({\bf{s}}^{t+1}\hspace{-0.5mm}-\hspace{-0.5mm}{\bf{s}}^{t})\hspace{-0.5mm}& \\
\end{matrix} \\
& \begin{matrix}
= \min\limits_{{\bf{s}}} f({\bf{s}}) - g({\bf{s}}^{t}) -\nabla_{{\bf{s}}} g^T ({\bf{s}}^{t}).({\bf{s}}-{\bf{s}}^{t})
\end{matrix} \\
& \begin{matrix}
\leq f({\bf{s}}^{t}) - g({\bf{s}}^{t}) -\nabla_{{\bf{s}}} g^T ({\bf{s}}^{t}).({\bf{s}^{t}}-{\bf{s}}^{t})=f({\bf{s}}^{t}) - g({\bf{s}}^{t})
\end{matrix}. \\
\end{align*}
Thus, the total power consumption of the network decreases as iterations continue. 

%
%
%
%
%
%

\ifCLASSOPTIONcaptionsoff
\newpage
\fi



\bibliographystyle{IEEEtran}
\bibliography{IEEEabrv,MCCFinalTMC_WithoutColor}

\begin{thebibliography}{10}
\providecommand{\url}[1]{#1}
\csname url@samestyle\endcsname
\providecommand{\newblock}{\relax}
\providecommand{\bibinfo}[2]{#2}
\providecommand{\BIBentrySTDinterwordspacing}{\spaceskip=0pt\relax}
\providecommand{\BIBentryALTinterwordstretchfactor}{4}
\providecommand{\BIBentryALTinterwordspacing}{\spaceskip=\fontdimen2\font plus
\BIBentryALTinterwordstretchfactor\fontdimen3\font minus
  \fontdimen4\font\relax}
\providecommand{\BIBforeignlanguage}[2]{{%
\expandafter\ifx\csname l@#1\endcsname\relax
\typeout{** WARNING: IEEEtran.bst: No hyphenation pattern has been}%
\typeout{** loaded for the language `#1'. Using the pattern for}%
\typeout{** the default language instead.}%
\else
\language=\csname l@#1\endcsname
\fi
#2}}
\providecommand{\BIBdecl}{\relax}
\BIBdecl

\bibitem{Maso1703:Green}
M.~Masoudi, B.~Khamidehi, and C.~Cavdar, ``Green cloud computing for multi cell
  networks,'' in \emph{2017 IEEE Wireless Communications and Networking
  Conference (WCNC) (IEEE WCNC 2017)}, San Francisco, USA, Mar. 2017.

\bibitem{kumar2010cloud}
K.~Kumar and Y.-H. Lu, ``Cloud computing for mobile users: Can offloading
  computation save energy?'' \emph{Computer}, vol.~43, no.~4, pp. 51--56, 2010.

\bibitem{kwak2015dream}
J.~Kwak, Y.~Kim, J.~Lee, and S.~Chong, ``Dream: dynamic resource and task
  allocation for energy minimization in mobile cloud systems,'' \emph{IEEE
  Journal on Selected Areas in Communications}, vol.~33, no.~12, pp.
  2510--2523, 2015.

\bibitem{kaewpuang2013framework}
R.~Kaewpuang, D.~Niyato, P.~Wang, and E.~Hossain, ``A framework for cooperative
  resource management in mobile cloud computing,'' \emph{IEEE Journal on
  Selected Areas in Communications}, vol.~31, no.~12, pp. 2685--2700, 2013.

\bibitem{xu2013survey}
Y.~Xu and S.~Mao, ``A survey of mobile cloud computing for rich media
  applications.'' \emph{IEEE Wireless Commun.}, vol.~20, no.~3, pp. 1--0, 2013.

\bibitem{2017arXiv170205309M}
P.~{Mach} and Z.~{Becvar}, ``{Mobile Edge Computing: A Survey on Architecture
  and Computation Offloading},'' \emph{ArXiv e-prints}, Feb. 2017.

\bibitem{zhang2015offloading}
Y.~Zhang, D.~Niyato, and P.~Wang, ``Offloading in mobile cloudlet systems with
  intermittent connectivity,'' \emph{IEEE Transactions on Mobile Computing},
  vol.~14, no.~12, pp. 2516--2529, 2015.

\bibitem{zhang2013energy}
W.~Zhang, Y.~Wen, K.~Guan, D.~Kilper, H.~Luo, and D.~O. Wu, ``Energy-optimal
  mobile cloud computing under stochastic wireless channel,'' \emph{IEEE
  Transactions on Wireless Communications}, vol.~12, no.~9, pp. 4569--4581,
  2013.

\bibitem{tran2019joint}
T.~X. Tran and D.~Pompili, ``Joint task offloading and resource allocation for
  multi-server mobile-edge computing networks,'' \emph{IEEE Transactions on
  Vehicular Technology}, vol.~68, no.~1, pp. 856--868, 2019.

\bibitem{basiri2018delay}
M.~Basiri and A.~Rasoolzadegan, ``Delay-aware resource provisioning for
  cost-efficient cloud gaming,'' \emph{IEEE Transactions on Circuits and
  Systems for Video Technology}, vol.~28, no.~4, pp. 972--983, 2018.

\bibitem{soyata2012cloud}
T.~Soyata, R.~Muraleedharan, C.~Funai, M.~Kwon, and W.~Heinzelman,
  ``Cloud-vision: Real-time face recognition using a mobile-cloudlet-cloud
  acceleration architecture,'' in \emph{Computers and communications (ISCC),
  2012 IEEE symposium on}.\hskip 1em plus 0.5em minus 0.4em\relax IEEE, 2012,
  pp. 000\,059--000\,066.

\bibitem{hosseini2017deep}
M.-P. Hosseini, T.~X. Tran, D.~Pompili, K.~Elisevich, and H.~Soltanian-Zadeh,
  ``Deep learning with edge computing for localization of epileptogenicity
  using multimodal rs-fmri and eeg big data,'' in \emph{Autonomic Computing
  (ICAC), 2017 IEEE International Conference on}.\hskip 1em plus 0.5em minus
  0.4em\relax IEEE, 2017, pp. 83--92.

\bibitem{nunna2015enabling}
S.~Nunna, A.~Kousaridas, M.~Ibrahim, M.~Dillinger, C.~Thuemmler, H.~Feussner,
  and A.~Schneider, ``Enabling real-time context-aware collaboration through 5g
  and mobile edge computing,'' in \emph{2015 12th international conference on
  information technology-new generations (ITNG)}.\hskip 1em plus 0.5em minus
  0.4em\relax IEEE, 2015, pp. 601--605.

\bibitem{claypool2006latency}
M.~Claypool and K.~Claypool, ``Latency and player actions in online games,''
  \emph{Communications of the ACM}, vol.~49, no.~11, pp. 40--45, 2006.

\bibitem{dusi2012closer}
M.~Dusi, S.~Napolitano, S.~Niccolini, and S.~Longo, ``A closer look at
  thin-client connections: statistical application identification for qoe
  detection,'' \emph{IEEE Communications Magazine}, vol.~50, no.~11, 2012.

\bibitem{skorin2010analysis}
L.~Skorin-Kapov and M.~Matijasevic, ``Analysis of qos requirements for e-health
  services and mapping to evolved packet system qos classes,''
  \emph{International journal of telemedicine and applications}, vol. 2010,
  p.~9, 2010.

\bibitem{DBLP:journals/corr/MaoYZHL17}
\BIBentryALTinterwordspacing
Y.~Mao, C.~You, J.~Zhang, K.~Huang, and K.~B. Letaief, ``Mobile edge computing:
  Survey and research outlook,'' \emph{CoRR}, vol. abs/1701.01090, 2017.
  [Online]. Available: \url{http://arxiv.org/abs/1701.01090}
\BIBentrySTDinterwordspacing

\bibitem{barbera2013offload}
M.~V. Barbera, S.~Kosta, A.~Mei, and J.~Stefa, ``To offload or not to offload?
  the bandwidth and energy costs of mobile cloud computing,'' in \emph{INFOCOM,
  2013 Proceedings IEEE}.\hskip 1em plus 0.5em minus 0.4em\relax IEEE, 2013,
  pp. 1285--1293.

\bibitem{Tout2017}
H.~Tout, C.~Talhi, N.~Kara, and A.~Mourad, ``Smart mobile computation
  offloading: Centralized selective and multi-objective approach,''
  \emph{Expert Systems with Applications}, pp.~--, 2017.

\bibitem{kosta2012thinkair}
S.~Kosta, A.~Aucinas, P.~Hui, R.~Mortier, and X.~Zhang, ``Thinkair: Dynamic
  resource allocation and parallel execution in the cloud for mobile code
  offloading,'' in \emph{INFOCOM, 2012 Proceedings IEEE}.\hskip 1em plus 0.5em
  minus 0.4em\relax IEEE, 2012, pp. 945--953.

\bibitem{mahmoodi2015cloud}
S.~E. Mahmoodi, K.~Subbalakshmi, and V.~Sagar, ``Cloud offloading for
  multi-radio enabled mobile devices,'' in \emph{2015 IEEE International
  Conference on Communications (ICC)}.\hskip 1em plus 0.5em minus 0.4em\relax
  IEEE, 2015, pp. 5473--5478.

\bibitem{ellouze2015mobile}
A.~Ellouze, M.~Gagnaire, and A.~Haddad, ``A mobile application offloading
  algorithm for mobile cloud computing,'' in \emph{Mobile Cloud Computing,
  Services, and Engineering (MobileCloud), 2015 3rd IEEE International
  Conference on}.\hskip 1em plus 0.5em minus 0.4em\relax IEEE, 2015, pp.
  34--40.

\bibitem{7425262}
E.~Meskar, T.~D. Todd, D.~Zhao, and G.~Karakostas, ``Energy aware offloading
  for competing users on a shared communication channel,'' \emph{IEEE
  Transactions on Mobile Computing}, vol.~16, no.~1, pp. 87--96, Jan 2017.

\bibitem{liu2014effective}
Y.~Liu and M.~J. Lee, ``An effective dynamic programming offloading algorithm
  in mobile cloud computing system,'' in \emph{Wireless Communications and
  Networking Conference (WCNC), 2014 IEEE}.\hskip 1em plus 0.5em minus
  0.4em\relax IEEE, 2014, pp. 1868--1873.

\bibitem{piunti2015energy}
P.~Piunti, C.~Cavdar, S.~Morosi, K.~E. Teka, E.~Del~Re, and J.~Zander, ``Energy
  efficient adaptive cellular network configuration with qos guarantee,'' in
  \emph{IEEE ICC 2015}, 2015, pp. 1658--1663.

\bibitem{chen2016joint}
M.-H. Chen, B.~Liang, and M.~Dong, ``Joint offloading decision and resource
  allocation for multi-user multi-task mobile cloud,'' in \emph{2016 IEEE
  International Conference on Communications (ICC)}.\hskip 1em plus 0.5em minus
  0.4em\relax IEEE, 2016, pp. 1--6.

\bibitem{7307234}
X.~Chen, L.~Jiao, W.~Li, and X.~Fu, ``Efficient multi-user computation
  offloading for mobile-edge cloud computing,'' \emph{IEEE/ACM Transactions on
  Networking}, vol.~PP, no.~99, pp. 1--1, 2015.

\bibitem{chen2015decentralized}
X.~Chen, ``Decentralized computation offloading game for mobile cloud
  computing,'' \emph{IEEE Transactions on Parallel and Distributed Systems},
  vol.~26, no.~4, pp. 974--983, 2015.

\bibitem{yang2013framework}
L.~Yang, J.~Cao, Y.~Yuan, T.~Li, A.~Han, and A.~Chan, ``A framework for
  partitioning and execution of data stream applications in mobile cloud
  computing,'' \emph{ACM SIGMETRICS Performance Evaluation Review}, vol.~40,
  no.~4, pp. 23--32, 2013.

\bibitem{cheng2016computation}
J.~Cheng, Y.~Shi, B.~Bai, and W.~Chen, ``Computation offloading in cloud-ran
  based mobile cloud computing system,'' in \emph{2016 IEEE International
  Conference on Communications (ICC)}.\hskip 1em plus 0.5em minus 0.4em\relax
  IEEE, 2016, pp. 1--6.

\bibitem{DBLP:journals/corr/Mao0L17}
\BIBentryALTinterwordspacing
Y.~Mao, J.~Zhang, and K.~B. Letaief, ``Joint task offloading scheduling and
  transmit power allocation for mobile-edge computing systems,'' \emph{CoRR},
  vol. abs/1701.05055, 2017. [Online]. Available:
  \url{http://arxiv.org/abs/1701.05055}
\BIBentrySTDinterwordspacing

\bibitem{7541539}
J.~Liu, Y.~Mao, J.~Zhang, and K.~B. Letaief, ``Delay-optimal computation task
  scheduling for mobile-edge computing systems,'' in \emph{2016 IEEE
  International Symposium on Information Theory (ISIT)}, July 2016, pp.
  1451--1455.

\bibitem{wang2016mobile}
Y.~Wang, M.~Sheng, X.~Wang, L.~Wang, and J.~Li, ``Mobile-edge computing:
  Partial computation offloading using dynamic voltage scaling,'' \emph{IEEE
  Transactions on Communications}, vol.~64, no.~10, pp. 4268--4282, 2016.

\bibitem{7850968}
A.~AL-Shuwaili, O.~Simeone, A.~Bagheri, and G.~Scutari, ``Joint uplink/downlink
  optimization for backhaul-limited mobile cloud computing with user
  scheduling,'' \emph{IEEE Transactions on Signal and Information Processing
  over Networks}, vol.~PP, no.~99, pp. 1--1, 2017.

\bibitem{liu2016wireless}
X.~Liu, Y.~Li, and H.-H. Chen, ``Wireless resource scheduling based on backoff
  for multi-user multi-service mobile cloud computing,'' \emph{IEEE
  Transactions on Vehicular Technology}, 2016.

\bibitem{huang2012dynamic}
D.~Huang, P.~Wang, and D.~Niyato, ``A dynamic offloading algorithm for mobile
  computing,'' \emph{IEEE Transactions on Wireless Communications}, vol.~11,
  no.~6, pp. 1991--1995, 2012.

\bibitem{8234573}
J.~Zhang, X.~Hu, Z.~Ning, E.~C.~H. Ngai, L.~Zhou, J.~Wei, J.~Cheng, and B.~Hu,
  ``Energy-latency trade-off for energy-aware offloading in mobile edge
  computing networks,'' \emph{IEEE Internet of Things Journal}, vol.~PP,
  no.~99, pp. 1--1, 2017.

\bibitem{access2010further}
E.~U. T.~R. Access, ``Further advance-ments for e-utra physical layer
  aspects,'' 3GPP TR 36.814, Tech. Rep., 2010.

\bibitem{lyu2017multiuser}
X.~Lyu, H.~Tian, C.~Sengul, and P.~Zhang, ``Multiuser joint task offloading and
  resource optimization in proximate clouds,'' \emph{IEEE Transactions on
  Vehicular Technology}, vol.~66, no.~4, pp. 3435--3447, 2017.

\bibitem{lorch2001improving}
J.~R. Lorch and A.~J. Smith, ``Improving dynamic voltage scaling algorithms
  with pace,'' in \emph{ACM SIGMETRICS Performance Evaluation Review}, vol.~29,
  no.~1.\hskip 1em plus 0.5em minus 0.4em\relax ACM, 2001, pp. 50--61.

\bibitem{yuan2003energy}
W.~Yuan and K.~Nahrstedt, ``Energy-efficient soft real-time cpu scheduling for
  mobile multimedia systems,'' in \emph{ACM SIGOPS Operating Systems Review},
  vol.~37, no.~5.\hskip 1em plus 0.5em minus 0.4em\relax ACM, 2003, pp.
  149--163.

\bibitem{yang2012energy}
M.~Yang, Y.~Wen, J.~Cai, and C.~H. Foh, ``Energy minimization via dynamic
  voltage scaling for real-time video encoding on mobile devices,'' in
  \emph{2012 IEEE International Conference on Communications (ICC)}.\hskip 1em
  plus 0.5em minus 0.4em\relax IEEE, 2012, pp. 2026--2031.

\bibitem{yuan2006energy}
W.~Yuan and K.~Nahrstedt, ``Energy-efficient cpu scheduling for multimedia
  applications,'' \emph{ACM Transactions on Computer Systems (TOCS)}, vol.~24,
  no.~3, pp. 292--331, 2006.

\bibitem{liu2018edge}
Q.~Liu, S.~Huang, J.~Opadere, and T.~Han, ``An edge network orchestrator for
  mobile augmented reality,'' in \emph{IEEE Conference on Computer
  Communications (IEEE INFOCOM )}.\hskip 1em plus 0.5em minus 0.4em\relax IEEE,
  2018, pp. 756--764.

\bibitem{you2017energy}
C.~You, K.~Huang, H.~Chae, and B.-H. Kim, ``Energy-efficient resource
  allocation for mobile-edge computation offloading,'' \emph{IEEE Transactions
  on Wireless Communications}, vol.~16, no.~3, pp. 1397--1411, 2017.

\bibitem{li2019energy}
S.~Li, Y.~Tao, X.~Qin, L.~Liu, Z.~Zhang, and P.~Zhang, ``Energy-aware mobile
  edge computation offloading for iot over heterogenous networks,'' \emph{IEEE
  Access}, 2019.

\bibitem{chen2015efficient}
X.~Chen, L.~Jiao, W.~Li, and X.~Fu, ``Efficient multi-user computation
  offloading for mobile-edge cloud computing,'' \emph{IEEE/ACM Transactions on
  Networking}, vol.~24, no.~5, pp. 2795--2808, 2015.

\bibitem{belotti2013mixed}
P.~Belotti, C.~Kirches, S.~Leyffer, J.~Linderoth, J.~Luedtke, and A.~Mahajan,
  ``Mixed-integer nonlinear optimization,'' \emph{Acta Numerica}, vol.~22, pp.
  1--131, 2013.

\bibitem{masoudi2017energy}
M.~Masoudi, H.~Zaefarani, A.~Mohammadi, and C.~Cavdar, ``Energy efficient
  resource allocation in two-tier ofdma networks with qos guarantees,''
  \emph{Wireless Networks}, pp. 1--15, 2017.

\bibitem{kha2012fast}
H.~H. Kha, H.~D. Tuan, and H.~H. Nguyen, ``Fast global optimal power allocation
  in wireless networks by local dc programming,'' \emph{IEEE Transactions on
  Wireless Communications}, vol.~11, no.~2, pp. 510--515, 2012.

\bibitem{boyd2004convex}
S.~Boyd and L.~Vandenberghe, \emph{Convex optimization}.\hskip 1em plus 0.5em
  minus 0.4em\relax Cambridge university press, 2004.

\bibitem{che2014joint}
E.~Che, H.~D. Tuan, and H.~H. Nguyen, ``Joint optimization of cooperative
  beamforming and relay assignment in multi-user wireless relay networks,''
  \emph{IEEE Transactions on Wireless Communications}, vol.~13, no.~10, pp.
  5481--5495, 2014.

\bibitem{FLOPs}
\BIBentryALTinterwordspacing
Mobile processors of 2018: The rise of machine learning features. [Online].
  Available:
  \url{https://uk.pcmag.com/forward-thinking/94011/mobile-processors-the-rise-of-machine-learning-featu}
\BIBentrySTDinterwordspacing

\bibitem{carroll2010analysis}
A.~Carroll, G.~Heiser \emph{et~al.}, ``An analysis of power consumption in a
  smartphone.'' in \emph{USENIX annual technical conference}, vol.~14.\hskip
  1em plus 0.5em minus 0.4em\relax Boston, MA, 2010, pp. 21--21.

\end{thebibliography}
\end{document}